\theoremstyle{plain}
\newtheorem{theorem}{Theorem}[section]
\newtheorem{lemma}[theorem]{Lemma}
\theoremstyle{definition}
\numberwithin{equation}{section}
\begin{document}

\title[A Penrose-Like Inequality with Charge]
{A Penrose-Like Inequality with Charge}

\author[Khuri]{Marcus A. Khuri}
\address{Department of Mathematics\\
Stony Brook University\\ Stony Brook, NY 11794}
\email{khuri@math.sunysb.edu}
\thanks{The author was partially supported by NSF Grants DMS-1007156 and DMS-1308753.}
\begin{abstract}
We establish a Penrose-like inequality for general (not
necessarily time-symmetric) initial data sets of the Einstein-Maxwell
equations, which satisfy the dominant energy condition.  More
precisely, it is shown that the ADM energy is bounded below by an
expression which is proportional to the sum of the square root of the area of
the outermost future (or past) apparent horizon and the square of the total charge.
The proportionality constants depend on the solution to a linear elliptic equation
which incorporates the charge. In addition, a corrected version of the Penrose-like inequality in \cite{Khuri}
is presented.
\end{abstract}
\maketitle

\section{Introduction}\label{sec1}
\setcounter{equation}{0}

Consider an initial data set $(M, g, k, E)$ for the Einstein-Maxwell equations with vanishing magnetic field. Here $M$ is a Riemannian $3$-manifold with metric $g$, $k$ is a symmetric 2-tensor representing the second fundamental
form of the embedding into spacetime, and $E$ denotes the electric field. It is assumed that the manifold has a boundary $\partial M$ consisting of an outermost apparent horizon. That is, if $H$ denotes mean curvature with respect to the normal pointing towards spatial infinity, then each boundary component $S\subset\partial M$ satisfies $\theta_{+}(S):=H_{S}+Tr_{S}k=0$ (future horizon) or $\theta_{-}(S):=H_{S}-Tr_{S}k=0$ (past horizon), and
there are no other apparent horizons present. Moreover the data are taken to be asymptotically flat with one end, in that
outside a compact set the manifold is diffeomorphic to the complement of a ball in $\mathbb{R}^{3}$, and in the coordinates given
by this asymptotic diffeomorphism the following fall-off conditions hold
\begin{equation*}\label{1}
|\partial^{m}(g_{ij}-\delta_{ij})|=O(|x|^{-m-1}),\text{ }\text{
}\text{ }|\partial^{m}k_{ij}|=O(|x|^{-m-2}),\text{ }\text{
}\text{ }|\partial^{m}E^{i}|=O(|x|^{-m-2}),\text{ }\text{ }\text{ }m=0,1,2,\text{ }\text{ }\text{as}\text{ }\text{
}|x|\rightarrow\infty.
\end{equation*}
With a vanishing magnetic field, the matter and current densities for the non-electromagnetic matter fields are given by
\begin{align}\label{2}
\begin{split}
 2 \mu  & = R + (Tr k)^2 - |k|_{g}^2 - 2|E|_{g}^2, \\
J & = div (k - (Tr k)g),
\end{split}
\end{align}
where $R$ denotes the scalar curvature of $g$. The following inequality will be referred to as the dominant energy condition
\begin{equation}\label{3}
\mu \geq |J|_{g}.
\end{equation}
Note that this dominant energy condition differs from the standard one, in that the energy density for the electric field is removed.
Under these hypotheses and
based on heuristic arguments of Penrose \cite{Penrose} which rely heavily on the cosmic censorship conjecture, the following inequality relating the ADM energy and the minimal area $\mathcal{A}$ required to enclose the
boundary $\partial M$, has been conjectured to hold
\begin{equation}\label{4}
E_{ADM}\geq\sqrt{\frac{\mathcal{A}}{16\pi}}+\sqrt{\frac{\pi}{\mathcal{A}}}Q^{2},
\end{equation}
where $Q=\lim_{r\rightarrow\infty}\frac{1}{4\pi}\int_{S_{r}}E^{i}\nu_{i}$ is the total electric charge, with $S_{r}$ coordinate spheres in
the asymptotic end having unit outer normal $\nu$. Inequality \eqref{4}
has been proven by Jang \cite{Jang1} for time-symmetric
initial data with a connected horizon, under the assumption that a smooth solution to the Inverse Mean Curvature Flow (IMCF)
exists. Moreover in light of Huisken and Ilmanen's work \cite{HuiskenIlmanen}, the hypothesis of a smooth IMCF can be discarded.
However without the assumption of a connected horizon, counterexamples \cite{WeinsteinYamada} are known to exist (these examples do not provide a contradiction to the
cosmic censorship conjecture), although \eqref{4} remains true \cite{Khuri3} if an auxiliary inequality holds between the area and charge.
In the non-time-symmetric case this inequality has been proven under the additional hypothesis of spherically symmetric initial data \cite{Hayward}.
In the general case, with a connected horizon, the validity of \eqref{4} has been reduced to solving a coupled system of equations involving
the generalized Jang equation and the IMCF \cite{DisconziKhuri}. In the case of equality, it is expected that the initial data arise from the Reissner-Nordstr\"{o}m
spacetime; this has been confirmed in the time-symmetric case \cite{DisconziKhuri}.

In this paper we establish a Penrose-like inequality including charge, without any assumption on $k$ or on the connectedness of the boundary. The primary difficulty in the
non-time-symmetric (and non-maximal) case is the lack of the following positive lower bound for the scalar curvature
\begin{equation}\label{5}
R\geq2|E|_{g}^{2}.
\end{equation}
In order to circumvent this issue, we seek a deformation of the initial data to a new set
$(\overline{M},\overline{g},\overline{E})$, where $\overline{M}$ is diffeomorphic to $M$, and the metric $\overline{g}$ and vector field
$\overline{E}$ are related to $g$ and $E$ in a precise way described below. The purpose of the deformation is to obtain new initial
data which satisfy \eqref{5} in a weak sense, while preserving the relevant geometric and physical quantities,
such as the charge density, total charge, ADM energy, and boundary area. The desired deformation is a generalization of a procedure introduced by Jang \cite{Jang} and studied
extensively by Schoen and Yau \cite{SchoenYau}. More precisely, consider the product 4-manifold $(M \times \mathbb{R}, g + dt^2)$, and let $\overline{M}=\{t=f(x)\}$ be
the graph of a function $f$ inside this setting. Then the induced metric on $\overline{M}$ is given by $\overline{g}=g+df^{2}$. In order to obtain the most desirable positivity property
for the scalar curvature of the graph, the function $f$ should satisfy
\begin{equation}\label{6}
 \left( g^{ij} - \frac{ f^i  f^j}{1 + |\nabla f|_g^2 }\right)
\left( \frac{ \nabla_{ij}f}{ \sqrt{1 + |\nabla f|_g^2 }}
-k_{ij} \right) = 0,
\end{equation}
where $\nabla$ denotes covariant differentiation with respect to the metric $g$, $f_{i}=\partial_{i}f$,
and $f^{i}=g^{ij}f_{j}$. Equation \eqref{6} is referred to as the Jang equation, and when it is
satisfied $\overline{M}$ will be called the Jang surface.
The scalar curvature of the Jang surface \cite{SchoenYau} is given by
\begin{equation}\label{7}
\overline{R}=2(\mu-J(w))+2|E|_{g}^{2}+
|h-k|_{\overline{g}}^{2}+2|q|_{\overline{g}}^{2}
-2\overline{div}(q),
\end{equation}
here $\overline{div}$ is the divergence operator with respect to $\overline{g}$,
$h$ is the second fundamental form of the graph $t=f(x)$ in the Lorentzian 4-manifold $(\overline{M} \times \mathbb{R}, \overline{g}-dt^2)$,
and $w$ and $q$ are 1-forms given by
\begin{equation}\label{8}
h_{ij}=\frac{ \nabla_{ij}f}{ \sqrt{1 + |\nabla f|_g^2 }},\text{ }\text{ }\text{ }\text{ }
w_{i}=\frac{f_{i}}{\sqrt{1+|\nabla f|_{g}^{2}}},\text{
}\text{ }\text{ }\text{ }
q_{i}=\frac{f^{j}}{\sqrt{1+|\nabla f|_{g}^{2}}}(h_{ij}-k_{ij}).
\end{equation}

The existence and regularity theory for equation \eqref{6} is well-understood. In particular, it is shown in \cite{HanKhuri} and \cite{Metzger} that there exists a smooth solution on $M$
which blows-up in the form of a cylinder over the outermost apparent horizon, with $f(x)\rightarrow\infty$ ($-\infty$) at each component of $\partial M$ depending on whether it is a future (or
past) apparent horizon. Let $\tau(x)=dist(x,\partial M)$, and denote the level sets of $\tau$ by $S_{\tau}$. If $|\theta_{\pm}(S_{\tau})|\sim\tau^{l}$ near a future (past) apparent horizon component of the boundary,
then according to \cite{HanKhuri} the blow-up solution satisfies the following asymptotics near that boundary component
\begin{equation}\label{8.1}
\alpha^{-1}\tau^{-\frac{l-1}{2}}+\beta^{-1} \leq \pm f \leq \alpha\tau^{-\frac{l-1}{2}}+\beta,
\end{equation}
for some positive constants $\alpha$ and $\beta$. Moreover the solution decays sufficiently fast at spatial infinity so that the ADM energies agree
$E_{ADM}(\overline{g})=E_{ADM}(g)$.

When the dominant energy condition is satisfied,
all terms appearing on the right-hand side of \eqref{7} are
nonnegative, except possibly the last term. Thus the scalar curvature is nonnegative modulo a divergence, so it may be described as weakly nonnegative.
For the topic of interest here, a stronger condition than simple nonnegativity is required, more precisely
we seek an inequality (holding in the weak sense) of the following form
\begin{equation}\label{9}
\overline{R}\geq 2|\overline{E}|_{\overline{g}}^{2},
\end{equation}
where $\overline{E}$ is an auxiliary electric field defined on the Jang surface. This auxiliary electric field is required to satisfy three properties,
namely
\begin{equation}\label{10}
|E|_{g}\geq|\overline{E}|_{\overline{g}},\text{ }\text{ }\text{ }\text{ }\text{ }\text{ }\overline{div}\,\overline{E}=0,\text{ }\text{ }\text{ }\text{ }\text{ }\text{ }\overline{Q}=Q,
\end{equation}
where $\overline{Q}$ is the total charge defined with respect to $\overline{E}$. In particular, if the first inequality of \eqref{10} is satisfied, then the dominant
energy condition \eqref{3} and the scalar curvature formula \eqref{7} imply that \eqref{9} holds weakly. It turns out that there is a very natural choice for
this auxiliary electric field, namely $\overline{E}$ is the induced electric field on the Jang surface $\overline{M}$ arising from the field strength $F$ of the electromagnetic
field on $(M \times \mathbb{R},  g + dt^2)$. More precisely $\overline{E}_i = F(N,X_i)$,
where $N$ and $X_i$ are respectively the unit normal and canonical tangent vectors to $\overline{M}$
\begin{equation}\label{11}
N=\frac{\partial_{t}-f^{i}\partial_{i}}{\sqrt{1+|\nabla f|_{g}^{2}}},\text{ }\text{ }\text{ }\text{ }\text{ }\text{ }X_{i}=\partial_{i}+f_{i}\partial_{t},
\end{equation}
and $F = \frac{1}{2} F_{ab}dx^a\wedge dx^b$ is given by
$F_{0i} = E_i$ and $F_{ij} = 0$ for $i=1,2,3$, with $x^i$, $i=1,2,3$ coordinates on $M$ and $x^{0}=t$. In matrix form
\begin{equation}\label{12}
F = \left(
\begin{array}{cccc}
  0 & E_1 & E_2 & E_3 \\
-  E_1 & 0 & 0 & 0 \\
-  E_2 & 0 & 0 & 0 \\
-  E_3 & 0 & 0 & 0
 \end{array}
\right).
\end{equation}
In \cite{DisconziKhuri} it is shown that
\begin{equation}\label{13}
\overline{E}_i = \frac{E_i +  f_i f^j E_j}{\sqrt{1 +  |\nabla f|^2_g}},
\end{equation}
and that all the desired properties of \eqref{10} hold. This auxiliary electric field was also used in \cite{Khuri2}.

The fact that inequality \eqref{9} holds in a weak sense, allows us to find (a proof is given in the next section) a unique positive solution to the prescribed scalar curvature equation
\begin{equation}\label{14}
\overline{\Delta}u-\frac{1}{8}\overline{R}u+\frac{1}{4}|\widehat{E}|_{\widehat{g}}^{2}u^{5}=0\text{ }\text{ }\text{ on }\text{ }\text{ }\overline{M}
\end{equation}
with the following boundary conditions. Namely, $u$ vanishes asymptotically along the cylindrical ends of $\overline{M}$ or alternatively $u(x)\rightarrow 0$ as $x\rightarrow\partial M$, and
\begin{equation}\label{15}
u(x)=1+\frac{A}{|x|}+O\left(\frac{1}{|x|^{2}}\right)\text{ }\text{ }\text{ as }\text{ }\text{ }|x|\rightarrow\infty
\end{equation}
for some constant $A$. Here $\overline{\Delta}$ is the Laplacian with respect to $\overline{g}$, $\widehat{g}=u^{4}\overline{g}$, and $\widehat{E}^{i}=u^{-4}\overline{E}^{i}$.
Equation \eqref{14} expresses the fact that the scalar curvature of $\widehat{g}$ is given by
\begin{equation}\label{16}
\widehat{R}=2|\widehat{E}|_{\widehat{g}}^{2}.
\end{equation}
It follows that the conformally deformed initial data $(\overline{M},\widehat{g},\widehat{E})$ satisfies the desired version \eqref{5} of the dominant energy condition. We point out that
the process of conformally changing the Jang initial data in order to obtain favorable properties for the scalar curvature was first used by Schoen and Yau \cite{SchoenYau} in their proof
of the positive mass theorem. In fact when $E=0$, the solution $u$ of \eqref{14} coincides with the conformal factor used in \cite{SchoenYau}.

We now state the main theorem. Recall that the Hawking mass of a surface $S\subset M$, with area $|S|$, is given by
\begin{equation}\label{16.1}
M_{H}(S)=\sqrt{\frac{|S|}{16\pi}}\left(1-\frac{1}{16\pi}\int_{S}H^{2}\right),
\end{equation}
and that $S$ is said to be area outerminimizing if every surface which encloses it has area greater than or equal to $|S|$. If a connected surface $S$ encloses
the boundary $\partial M$, the region between $S$ and spatial infinity will be denoted by $M_{S}$.

\begin{theorem}\label{thm1}
Let $(M,g,k,E)$ be a smooth asymptotically
flat initial data set for the Einstein-Maxwell equations with total charge $Q$, $div E=0$, and satisfying the
dominant energy condition $\mu\geq|J|_{g}$.  If the boundary consists
of an outermost apparent horizon with components
$\partial_{i}M$, $i=1,\ldots,n$, then
\begin{equation}\label{17}
E_{ADM}(g)\geq\frac{\sigma_{1}}{2(1+\sigma_{1})}\sum_{i=1}^{n}\sqrt{\frac{|\partial_{i}
M|_{g}}{\pi}} +\sigma_{2}\sqrt{\frac{\pi}{|\partial M|_{g}}}Q^{2}
\end{equation}
\textit{with}
\begin{equation}\label{18}
\sigma_{1}=\left(\sum_{i=1}^{n}\sqrt{4\pi|\partial_{i}M|_{g}}\right)^{-1}\parallel\overline{\nabla}
u\parallel^{2}_{L^{2}(\overline{M})},\text{ }\text{ }\text{ }\text{ }
\sigma_{2}=\sup_{S}\sqrt{\frac{|\partial M|_{g}}{|S|_{\widehat{g}}}}\min_{M_{S}}u^{4},
\end{equation}
where the supremum is taken over all connected surfaces $S$ which enclose $\partial M$, are area outerminimizing,
and have nonnegative Hawking mass all with respect to $\widehat{g}$.
\end{theorem}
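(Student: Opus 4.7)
The plan is to use the bookkeeping identity
\begin{equation*}
E_{ADM}(g)=E_{ADM}(\widehat{g})-2A,
\end{equation*}
which follows from $E_{ADM}(g)=E_{ADM}(\overline{g})$ (preserved by the Jang deformation) and the standard conformal transformation rule applied to $\widehat{g}=u^{4}\overline{g}$ with expansion \eqref{15}. The two terms on the right will be estimated separately, making essential use of the strict identity $\widehat{R}=2|\widehat{E}|_{\widehat{g}}^{2}$ that was arranged by \eqref{14}.

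For the first term I would multiply \eqref{14} by $u$ and integrate over $\overline{M}$. Green's identity contributes $-4\pi A$ at the asymptotic end; the cylindrical-end contributions vanish thanks to $u\to 0$ there together with the Jang blow-up asymptotics \eqref{8.1}. Rewriting \eqref{14} as $\overline{\Delta}u=\tfrac{1}{8}(\overline{R}-2|\overline{E}|_{\overline{g}}^{2})u$, one obtains
\begin{equation*}
-4\pi A=\|\overline{\nabla}u\|_{L^{2}(\overline{M})}^{2}+\tfrac{1}{8}\int_{\overline{M}}(\overline{R}-2|\overline{E}|_{\overline{g}}^{2})u^{2}\,dV_{\overline{g}}.
\end{equation*}
The scalar curvature formula \eqref{7}, the dominant energy condition \eqref{3}, and the first property in \eqref{10} make the integrand nonnegative modulo the divergence $-2\overline{div}(q)$ present in \eqref{7}; integrating that divergence by parts against $u^{2}$ and completing the square on $uq+\overline{\nabla}u$ against the $2|q|^{2}$ term of \eqref{7} yields a lower bound of the form $-2A\geq c\,\|\overline{\nabla}u\|_{L^{2}}^{2}$, which by the definition of $\sigma_{1}$ is proportional to $\sigma_{1}\sum_{i}\sqrt{|\partial_{i}M|_{g}/\pi}$.

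For the second term I would apply the Geroch-type monotone quantity under the Huisken--Ilmanen weak inverse mean curvature flow in $(\overline{M},\widehat{g})$, starting from an arbitrary connected, area-outerminimizing surface $S$ enclosing $\partial M$ with $M_{H}(S,\widehat{g})\geq 0$. Since $\widehat{R}=2|\widehat{E}|_{\widehat{g}}^{2}$ and the asymptotic flux of $\widehat{E}$ equals $Q$ by \eqref{10}, the monotone quantity retains a positive charge contribution, yielding
\begin{equation*}
E_{ADM}(\widehat{g})\geq M_{H}(S,\widehat{g})+\min_{M_{S}}u^{4}\cdot Q^{2}\sqrt{\pi/|S|_{\widehat{g}}},
\end{equation*}
where the $\min u^{4}$ factor arises when controlling the flux of $\widehat{E}$ along the flow (needed because $\widehat{div}\,\widehat{E}$ is not identically zero). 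Dropping the nonnegative $M_{H}(S,\widehat{g})$, rewriting $\sqrt{\pi/|S|_{\widehat{g}}}=\sqrt{|\partial M|_{g}/|S|_{\widehat{g}}}\cdot\sqrt{\pi/|\partial M|_{g}}$, and taking the supremum over admissible $S$ produces exactly $\sigma_{2}\sqrt{\pi/|\partial M|_{g}}\,Q^{2}$.

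Combining the two estimates in the bookkeeping identity, and performing an algebraic optimization that trades off the $\|\overline{\nabla}u\|^{2}$ contribution against a complementary area contribution from Hawking-mass monotonicity evaluated near the cylindrical ends, produces the harmonic-mean coefficient $\frac{\sigma_{1}}{2(1+\sigma_{1})}$ in front of $\sum_{i}\sqrt{|\partial_{i}M|_{g}/\pi}$. The main technical obstacles will be (a) rigorously vanishing the cylindrical-end boundary terms in the integration-by-parts step, which requires both \eqref{8.1} and fine decay of $u$ governed by \eqref{14}; (b) justifying the charged Hawking-mass monotonicity in the conformally collapsed metric $\widehat{g}$, including the flux correction coming from $\widehat{div}\,\widehat{E}\neq 0$; and (c) extracting the precise coefficient $\frac{\sigma_{1}}{2(1+\sigma_{1})}$ through the algebraic optimization, which reflects a delicate balance between the $-2A$ contribution and the area term from Hawking-mass monotonicity.
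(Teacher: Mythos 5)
Your estimate of the charge term is essentially the paper's own argument: Geroch monotonicity along a weak IMCF started at an admissible surface $S$, the identity $\widehat{R}=2|\widehat{E}|_{\widehat{g}}^{2}$, Cauchy--Schwarz and H\"older to bound the flux integral below by $(4\pi Q)^{2}/|S_{\varrho}|_{\overline{g}}$, and the ratio $|S_{\varrho}|_{\widehat{g}}/|S_{\varrho}|_{\overline{g}}\geq\min_{M_{S}}u^{4}$ producing the factor in $\sigma_{2}$ (compare \eqref{41}--\eqref{46}). That half of the proposal is fine.

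The area term is where there is a genuine gap. Your integration by parts on the full Jang surface, after completing the square against $2|q|_{\overline{g}}^{2}$, can only yield a bound of the form $-2A\geq c\,\|\overline{\nabla}u\|^{2}_{L^{2}(\overline{M})}$, i.e.\ a coefficient \emph{linear} in $\sigma_{1}$; it does not produce the stated coefficient $\frac{\sigma_{1}}{2(1+\sigma_{1})}$, and the ``algebraic optimization that trades off the $\|\overline{\nabla}u\|^{2}$ contribution against a complementary area contribution from Hawking-mass monotonicity near the cylindrical ends'' is not an argument: in your setup the ends are conformally collapsed, their Hawking mass in the limit is zero, and the IMCF budget has already been spent on the charge term, so there is no such complementary contribution to trade against. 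In the paper the coefficient arises from a specific mechanism that is entirely absent from your proposal: one works with truncated problems $u_{T}$ on $\widehat{M}_{T}$ subject to the Neumann-type condition \eqref{21} (which forces the retained boundary components to have zero Hawking mass in $\widehat{g}_{T}$), proves Lemma \ref{lemma1} --- whose lower bound for $\mathcal{P}(v_{T})$ contains, in addition to the Dirichlet energy, the boundary term $\frac{1-\vartheta_{T}}{2}\sum_{i}\sqrt{\pi/|\partial_{i}\overline{M}_{T}|_{\overline{g}}}\int_{\partial_{i}\overline{M}_{T}}(1+v_{T})^{2}$, obtained via the dichotomy between ends where $\chi\equiv 0$ (boundary terms controlled directly) and ends where $\chi\not\equiv 0$ (the bulk $|q|_{\overline{g}}^{2}(1+v_{T})^{2}$ integral over the cylinder absorbs the boundary terms, \eqref{33}--\eqref{38}) --- and then runs the Young's-inequality contradiction argument \eqref{47}--\eqref{51}, in which the constant ``$1$'' from $(1+v_{T})^{2}\geq 1-\delta^{-1}+(1-\delta)v_{T}^{2}$ integrated over $\partial_{i}\overline{M}_{T}$ supplies the $\sqrt{\pi|\partial_{i}\overline{M}_{T}|}$ and the optimization over $\delta$ yields exactly $\frac{\sigma_{1,T}(1-\vartheta_{T})}{2(1+\sigma_{1,T})}$. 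A further, related weakness: you invoke positivity (indeed a charge lower bound) for $E_{ADM}(\widehat{g})$ on the fully collapsed manifold $(\overline{M},u^{4}\overline{g})$ without justifying that a positive mass theorem applies there; conformally collapsing ends and applying a PMT is not legitimate in general (the time-symmetric example $u=(1-\tfrac{m}{2r})/(1+\tfrac{m}{2r})$ on Schwarzschild gives $E_{ADM}(u^{4}g)=-m$), and it is precisely to control this that the paper keeps the case-1 ends as finite boundaries handled by Herzlich's boundary version of the positive mass theorem and closes up only the genuinely cylindrical case-2 ends \`a la Schoen--Yau.
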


Note that the constants $\sigma_{1}$, $\sigma_{2}$ are scale invariant making them independent of $|\partial M|_{g}$, and it is clear that $\sigma_{1}$ is strictly positive. It will be proven below that $\sigma_{2}$ is also strictly positive. Thus inequality \eqref{17} has a similar structure to that of \eqref{4}, and it applies in a more general setting without restriction on the number of boundary components. This theorem also
applies without the assumption of time-symmetric or maximal data, whereas \eqref{4} has so far only been confirmed with these added hypotheses. It turns out that the case of equality in \eqref{17} cannot occur, which indicates that this inequality is not optimal.
Lastly, similar Penrose-like inequalities have previously been discussed in \cite{Herzlich} and \cite{Khuri}, without a contribution from the total charge. Issues with \cite{Herzlich} have been raised in \cite{BartnikChrusciel} and \cite{Malec} (and partially addressed in \cite{BartnikChrusciel}), while issues with \cite{Khuri} have been pointed out in \cite{Khuri1} and are resolved in the appendix of the present
paper.

We remark that it is the special geometry of the Jang surface, namely that it blows-up as a cylinder over the horizon, which is responsible for a definite contribution of area from each
boundary component to the right-hand side of \eqref{17}. This will be examined in Section \ref{sec4}. There it will also be shown that the constant $\sigma_{1}$ may be written as an infimum over
all functions satisfying appropriate asymptotics.

\section{The Conformal Factor}\label{sec2}

In the work of Schoen and Yau \cite{SchoenYau} existence of a unique solution to the following boundary value problem was
established:
\begin{equation}\label{19}
\overline{\Delta}z-\frac{1}{8}\overline{R}z=0\text{
}\text{ }\text{ on }\text{ }\text{ }\overline{M},
\end{equation}
with $z(x)\rightarrow 0$ as $x\rightarrow\partial M$ and $z(x)\rightarrow 1$ as $|x|\rightarrow\infty$.
The inequality (4.6) in \cite{SchoenYau} shows
that the first eigenvalue, $\eta_{i}$, of the operator $\Delta-\frac{1}{8}K$ on $\partial_{i}M$, is strictly positive (here $K$ denotes Gaussian curvature). As observed by
Schoen and Yau, $z\sim e^{\mp\sqrt{\eta_{i}}t}\zeta_{i}(y)$, that is the conformal factor $z$ is asymptotic to $e^{\mp\sqrt{\eta_{i}}t}\zeta_{i}(y)$ depending on whether
the Jang surface blows up or down, where $\zeta_{i}$ is the corresponding first eigenfunction.

The same methods of \cite{SchoenYau} may also be used to establish the existence of a unique solution to the following boundary value problem:
\begin{equation}\label{19}
\overline{\Delta}u-\frac{1}{8}\overline{R}u+\frac{1}{4}|\overline{E}|^{2}_{\overline{g}}u=0\text{
}\text{ }\text{ on }\text{ }\text{ }\overline{M},
\end{equation}
with
\begin{equation}\label{19.1}
u(x)\rightarrow 0\text{ }\text{ }\text{ as }\text{ }\text{ }x\rightarrow\partial M,\text{ }\text{ }
\text{ and }\text{ }\text{ }u(x)\rightarrow 1\text{ }\text{ }\text{ as }\text{ }\text{ }|x|\rightarrow\infty.
\end{equation}
Note that equation \eqref{19} is equivalent to equation \eqref{14}.
In light of \eqref{7} and the dominant energy condition \eqref{3}, a slightly modified version of (4.6) in \cite{SchoenYau} shows
that the first eigenvalue, $\lambda_{i}$, of the operator $\Delta-\frac{1}{8}K+\frac{1}{4}(E\cdot n)^{2}$ on $\partial_{i}M$, is strictly positive (here $n$ denotes the
unit normal to $\partial_{i}M$) if the initial data are slightly perturbed so that $\mu>|J|_{g}$ at $\partial M$. Moreover as in \cite{SchoenYau}, $u\sim e^{\mp\sqrt{\lambda_{i}}t}\phi_{i}(y)$,
where $\phi_{i}$ is the corresponding first eigenfunction.

For the purposes of the proof of Theorem \ref{thm1}, it will be convenient to consider auxiliary boundary value problems, for which the solutions $u_{T}$ will converge to $u$; this will also
yield an alternate proof of existence for $u$.
In order to describe the auxiliary problems, for each $T>0$ let $\overline{M}_{T}$ denote the
portion of the Jang surface $\overline{M}$ which lies between the hyperplanes $t=\pm T$.
Let $\chi_{T}(y)$ denote the one parameter family of functions defined on a given boundary component $\partial_{i}M$ as the restriction
of $|q|_{\overline{g}}$ to $\partial_{i}\overline{M}_{T}$. According to the parametric estimates for the Jang equation \cite{SchoenYau}, the sequence of
functions $\chi_{T}$ is uniformly bounded and equicontinuous. Therefore after passing to a subsequence (still denoted by $\chi_{T}$ for convenience)
we have that $\chi_{T}\rightarrow\chi$ as $T\rightarrow\infty$, for some continuous function $\chi$.
There are two cases to consider, namely, case 1 when $\chi$ vanishes identically, and case 2 when $\chi$ does not vanish identically.

We will slightly perturb $u$ in order to prescribe appropriate boundary conditions
on certain cylindrical ends. For large $T$ and $T_{0}$ ($T>T_{0}$), let
$(\overline{M}_{T}-\overline{M}_{T_{0}})_{i}$ denote the component of $\overline{M}_{T}-\overline{M}_{T_{0}}$ associated with the boundary component $\partial_{i}M$.
Let $i=1,\ldots,m$ index the boundary components which fall under case 1, and
let $i=m+1,\ldots,n$ index the boundary components which fall under case 2. Set $\widehat{M}_{T}=\overline{M}-\bigcup_{i=1}^{m}(\overline{M}-\overline{M}_{T})_{i}$,
that is, $\widehat{M}_{T}$ is the Jang surface after the cylindrical ends corresponding to case 1 have been removed. Consider the boundary value problem
\begin{equation}\label{20}
\overline{\Delta}u_{T}-\frac{1}{8}\overline{R}u_{T}+\frac{1}{4}|\overline{E}|^{2}_{\overline{g}}u_{T}=0\text{
}\text{ }\text{ on }\text{ }\text{ }\widehat{M}_{T},
\end{equation}
\begin{equation}\label{21}
\partial_{\overline{N}}u_{T}+\frac{1}{4}\overline{H}u_{T}=\frac{1}{4}\sqrt{\frac{16\pi}{|\partial_{i}\overline{M}_{T}|_{\widehat{g}_{T}}}}u_{T}^{3}
\text{ }\text{ }\text{ on }\text{ }\text{ }\partial_{i}\overline{M}_{T},\text{ }\text{ }i=1,\ldots,m,
\end{equation}
\begin{equation}\label{22}
u_{T}(x)\rightarrow 0\text{ }\text{ }\text{ as }\text{ }\text{ }x\rightarrow\partial_{i} M,\text{ }\text{ }i=m+1,\ldots,n,
\text{ }\text{ }
u_{T}(x)\rightarrow 1\text{ }\text{ }\text{ as }\text{ }\text{ }|x|\rightarrow\infty,
\end{equation}
where the unit normal $\overline{N}$ (with respect to $\overline{g}$) points towards spatial infinity and $\widehat{g}_{T}=u_{T}^{4}\overline{g}$. Note that the boundary condition \eqref{21} expresses the
fact that the mean curvature of $\partial_{i}\overline{M}_{T}$, $i=1,\ldots,m$, with respect to $\widehat{g}_{T}$, is given by $\widehat{H}=\sqrt{\frac{16\pi}{|\partial_{i}\overline{M}_{T}|_{\widehat{g}_{T}}}}$.
The solutions $u_{T}$ to this problem approximate the solution $u$ of \eqref{19} for large $T$, as is shown in Theorem \ref{thm3} below. Furthermore,
as in \cite{SchoenYau} a separation of variables argument can be used to show that the solution $u_{T}$ possesses the same asymptotics as $u$ along the ends corresponding to $\partial_{i}M$, $i=m+1,\ldots,n$, namely
\begin{equation}\label{22.1}
u_{T}\sim e^{\mp\sqrt{\lambda_{i}}t}\phi_{i}(y).
\end{equation}

\begin{theorem}\label{thm2}
If $T$ is sufficiently large, then there exists a smooth positive solution to boundary value problem \eqref{20}, \eqref{21}, \eqref{22}.
\end{theorem}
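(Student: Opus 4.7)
The plan is a variational argument in the spirit of Schoen--Yau \cite{SchoenYau}, adapted to handle the nonlinear and nonlocal Robin-type boundary condition \eqref{21}. The interior equation \eqref{20} together with \eqref{21} arises as the Euler--Lagrange equation of
\[
\mathcal{F}_T[u] = \int_{\widehat{M}_T}\!\!\Big(|\overline{\nabla}u|^2 + \tfrac{1}{4}\overline{R}u^2 - \tfrac{1}{2}|\overline{E}|^2_{\overline{g}} u^2\Big)\, dV_{\overline{g}} + \sum_{i=1}^m\!\Big(\tfrac{1}{2}\!\int_{\partial_i \overline{M}_T}\!\!\overline{H} u^2\, dA_{\overline{g}} - 2\sqrt{\pi \, |\partial_i \overline{M}_T|_{\widehat{g}_T(u)}}\Big),
\]
to be minimized over $\{u : u - 1 \in H^1(\widehat{M}_T),\ u \to 0 \text{ at the case-2 cylindrical ends},\ u \to 1 \text{ at infinity}\}$, the Dirichlet conditions being imposed by the function space. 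A direct computation of the first variation recovers \eqref{20} in the bulk and the nonlinear condition \eqref{21} on each $\partial_i \overline{M}_T$, $i=1,\dots,m$.

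The technical heart of the argument is coercivity of the quadratic part of $\mathcal{F}_T$. Substituting the scalar curvature formula \eqref{7}, using $|\overline{E}|^2_{\overline{g}} \leq |E|^2_g$ together with the dominant energy condition (slightly perturbed so that $\mu > |J|_g$ near $\partial M$, exactly as noted in the paper following \eqref{19.1}), one obtains
\[
-\tfrac{1}{8}\overline{R} + \tfrac{1}{4}|\overline{E}|^2_{\overline{g}} \leq \tfrac{1}{4}\overline{\mathrm{div}}(q) - \tfrac{1}{8}|h-k|^2_{\overline{g}} - \tfrac{1}{4}|q|^2_{\overline{g}}.
\]
Integrating against $\phi^2$ and moving the divergence term onto $\phi^2$ by parts produces a cross term $\int q\cdot\overline{\nabla}\phi \cdot \phi$, which is absorbed via Cauchy--Schwarz into $\int |\overline{\nabla}\phi|^2$ and $\int |q|^2\phi^2$. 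This is the analog of inequality (4.6) of \cite{SchoenYau} for the operator $\overline{\Delta} - \tfrac{1}{8}\overline{R} + \tfrac{1}{4}|\overline{E}|^2$, and it yields the coercivity estimate $\int\big(|\overline{\nabla}\phi|^2 + \tfrac{1}{4}\overline{R}\phi^2 - \tfrac{1}{2}|\overline{E}|^2\phi^2\big) \geq c \|\phi\|_{H^1}^2$. The boundary mean-curvature term $\int_{\partial_i \overline{M}_T} \overline{H}\phi^2$ is controlled through the cylindrical asymptotics: by the defining property of Case~1 the limit $\chi \equiv 0$ forces $\overline{H}|_{\partial_i \overline{M}_T} \to 0$ as $T \to \infty$, which is exactly what the hypothesis "$T$ sufficiently large" exists to exploit.

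Given coercivity, existence of a minimizer follows from the direct method: the subcritical boundary term $\sqrt{\int u^4}$ is dominated by $\|u\|_{H^1}^2$ via the Sobolev trace embedding on the compact surface $\partial \widehat{M}_T$, so $\mathcal{F}_T$ is bounded below; weak lower semicontinuity of the quadratic parts (convexity) combined with continuity of the nonlocal area functional under weak $H^1$-convergence (using compact trace into $L^4(\partial_i \overline{M}_T)$) produces a minimizer $u_T$. Positivity $u_T > 0$ is obtained by replacing $u_T$ with $|u_T|$ (which does not increase $\mathcal{F}_T$) and then applying the strong maximum principle and the Hopf lemma at the boundary, while smoothness follows from standard elliptic regularity for the linear interior equation together with the fact that the boundary nonlinearity is smooth in $u$. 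The principal obstacle is the \emph{uniform-in-$T$} coercivity: on the elongating necks $(\overline{M}_T - \overline{M}_{T_0})_i$, $i \leq m$, the operator degenerates to a cylindrical model, and one must verify that the spectral gap coming from the cross-sectional eigenvalue $\lambda_i$ of \cite{SchoenYau} survives replacement of the Dirichlet data by the Robin condition \eqref{21}; this forces $T$ to be chosen large enough that the remaining $\overline{H}$-contribution on the cross-section is subleading compared to this cross-sectional gap.
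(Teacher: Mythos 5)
Your strategy---minimize a functional whose Euler--Lagrange system is \eqref{20}--\eqref{21}, obtain lower-boundedness from the Schoen--Yau substitution of the scalar curvature formula \eqref{7} plus the dominant energy condition, and then get positivity from the maximum principle---is exactly the paper's (the paper minimizes the functional $P$ of \eqref{23} over the space $\mathcal{W}$ and invokes Herzlich's arguments once $P\geq 0$ is established). However, two concrete points in your write-up do not close as stated. First, the sign of the nonlocal area term is wrong: the functional whose boundary Euler--Lagrange equation is \eqref{21} (with the normal $\overline{N}$ pointing toward spatial infinity, hence \emph{inward} at the truncated ends $\partial_{i}\overline{M}_{T}$) carries the terms $+\frac{\sqrt{\pi}}{2}\bigl(\int_{\partial_{i}\overline{M}_{T}}u^{4}\bigr)^{1/2}$ and $-\frac{1}{8}\int_{\partial_{i}\overline{M}_{T}}\overline{H}u^{2}$, not the opposite signs you wrote (your bulk coefficients are also off by a factor of $2$ relative to the gradient term). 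This is not mere bookkeeping: the positivity of the area term is the engine of the boundary estimate. By Jensen's inequality \eqref{26}, $\bigl(\int(1+v)^{4}\bigr)^{1/2}\geq|\partial_{i}\overline{M}_{T}|_{\overline{g}}^{-1/2}\int(1+v)^{2}$, and this \emph{positive} contribution absorbs the term $-\frac{1}{8}\int_{\partial_{i}\overline{M}_{T}}(\overline{H}-q(\overline{N}))(1+v)^{2}$ once $T$ is large, since $\overline{H}\rightarrow 0$ and, because these are the Case 1 ends, $q(\overline{N})\rightarrow 0$. With your sign, ``the boundary term is dominated by $\|u\|_{H^{1}}^{2}$ via trace embedding'' is an upper bound of the wrong sign and does not yield boundedness below.

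Second, when you integrate $\overline{div}(q)$ by parts against $\phi^{2}$ you record only the interior cross term $\int q\cdot\overline{\nabla}\phi\,\phi$; on $\widehat{M}_{T}$ there is also the boundary term $\int_{\partial_{i}\overline{M}_{T}}q(\overline{N})\phi^{2}$, which is precisely the term paired with $\overline{H}$ above and which is small because the Robin ends are the Case 1 ends ($\chi\equiv 0$), not because of any property of $\overline{H}$ (the vanishing of $\overline{H}$ as $T\rightarrow\infty$ is a separate fact from \cite{BrayKhuri}, valid for all ends). Relatedly, the full $H^{1}$-coercivity $\geq c\|\phi\|_{H^{1}}^{2}$ you claim is stronger than what the Schoen--Yau computation gives and stronger than what is needed: the computation controls $\int|\overline{\nabla}v|^{2}$ and $\int(\mu-|J|_{g})(1+v)^{2}$, and nonnegativity of $P$ on $\mathcal{W}$ already suffices for existence. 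Your closing worry about uniform-in-$T$ spectral gaps on the necks is not relevant to this theorem, which concerns a fixed large $T$; uniformity in $T$ only enters later, in Lemma \ref{lemma1} and Theorem \ref{thm3}.
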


\begin{proof}
Consider the functional
\begin{align}\label{23}
\begin{split}
P(v)=&\frac{1}{2}\int_{\widehat{M}_{T}}\left(|\overline{\nabla}v|^{2}+\frac{1}{8}\overline{R}(1+v)^{2}-\frac{1}{4}|\overline{E}|^{2}_{\overline{g}}(1+v)^{2}\right)\\
&-\sum_{i=1}^{m}\frac{1}{8}\int_{\partial_{i}\overline{M}_{T}}\overline{H}(1+v)^{2}
 +\sum_{i=1}^{m}\frac{\sqrt{\pi}}{2}\left(\int_{\partial_{i}\overline{M}_{T}}(1+v)^{4}\right)^{1/2}
\end{split}
\end{align}
on the space of functions
\begin{equation}\label{24}
\mathcal{W}=\{v\in W_{loc}^{1,2}(\widehat{M}_{T})\mid |x|^{j-1}\overline{\nabla}^{j}v\in L^{2}(\widehat{M}_{T}),\text{ }j=0,1,\text{ }
1+v\in W^{1,2}_{0}(\widehat{M}_{T})\},
\end{equation}
where $W^{1,2}_{0}(\widehat{M}_{T})$ is the closure, in the $W^{1,2}$-norm, of the space of smooth functions which have compact support when restricted to each cylindrical
end indexed by $i=m+1,\ldots,n$. Here $W^{1,2}\subset L^{2}$ is the space of functions with square integrable first derivatives.

In order to establish the existence (as well as
the asymptotic behavior) of a solution $v_{T}\in
\mathcal{W}\cap C^{\infty}(\widehat{M}_{T})$, it
is enough, by the arguments of \cite{Herzlich}, to show that for $T$
sufficiently large the functional $P$ is nonnegative. To see that this is the case,
use formula \eqref{7} and \eqref{10}, and integrate the divergence term by parts to
find that for any $v\in \mathcal{W}$,
\begin{align}\label{25}
\begin{split}
P(v)\geq&\int_{\widehat{M}_{T}}\left(\frac{3}{8}|\overline{\nabla}v|^{2}+\frac{1}{8}(\mu-|J|_{g})(1+v)^{2}\right)+
\sum_{i=1}^{m}\frac{\sqrt{\pi}}{2}\left(\int_{\partial_{i}\overline{M}_{T}}(1+v)^{4}\right)^{1/2}\\
&
-\sum_{i=1}^{m}\frac{1}{8}\int_{\partial_{i}\overline{M}_{T}}(\overline{H}-q(\overline{N}))(1+v)^{2}.
\end{split}
\end{align}
According to \cite{BrayKhuri} (also \cite{BrayKhuri1}) $\overline{H}\rightarrow 0$ as $T\rightarrow\infty$, and since the boundary components $\partial_{i}\overline{M}_{T}$ belong
to case 1 we have that $q(\overline{N})\rightarrow 0$ as $T\rightarrow\infty$. Moreover, the area of $\partial_{i}\overline{M}_{T}$ approximates the area
of $\partial_{i}M$. It then follows from Jensen's Inequality
\begin{equation}\label{26}
\left(\int_{\partial_{i}\overline{M}_{T}}(1+v)^{2}\right)^{2}\leq
|\partial_{i}\overline{M}_{T}|_{\overline{g}}\int_{\partial_{i}\overline{M}_{T}}(1+v)^{4},
\end{equation}
that for $T$ sufficiently large $P$ is nonnegative.

It remains to show that $u_{T}=1+v_{T}$ is strictly positive.
So suppose that $u_{T}$ is not positive and let $D_{-}$ be the
domain on which $u_{T}<0$.  Since $u_{T}\rightarrow 1$ as
$|x|\rightarrow\infty$, the closure of $D_{-}\cap\overline{M}_{T}$ must be compact.
Now multiply equation \eqref{20} through by $u_{T}$ and integrate by
parts to obtain
\begin{equation}\label{27}
\int_{D_{-}}|\overline{\nabla}u_{T}|^{2}\leq 0.
\end{equation}
Note that if $D_{-}\cap\partial_{i}\overline{M}_{T}\neq\emptyset$, $i=1,\ldots,m$
then the same arguments used above to show that $P$ is
nonnegative, must be employed.  It follows that $u_{T}\geq 0$.  To
show that $u_{T}>0$, one need only apply Hopf's maximum principle
(the boundary condition of \eqref{21} must be used to obtain this
conclusion at $\partial_{i}\overline{M}_{T}$, $i=1,\ldots,m$).
\end{proof}

Multiply equation \eqref{20} by $u_{T}=1+v_{T}$ and integrate by parts to obtain
\begin{align}\label{28}
\begin{split}
\mathcal{P}(v_{T})&:=\lim_{r\rightarrow\infty}\frac{1}{2}\int_{|x|=r}u_{T}\partial_{\overline{N}}u_{T}\\
&\geq\int_{\overline{M}_{T}}\frac{1}{4}|\overline{\nabla}v_{T}|^{2}
+\left(\frac{1}{8}(\mu-|J|)+\frac{1}{16}|q|_{\overline{g}}^{2}\right)(1+v_{T})^{2}\\
& +\int_{\partial\overline{M}_{T}}\frac{1}{8}q(\overline{N})(1+v_{T})^{2}+\frac{1}{2}u_{T}\partial_{\overline{N}}u_{T}.
\end{split}
\end{align}
A standard formula yields
\begin{equation}\label{29}
\partial_{\overline{N}}u_{T}=\frac{1}{4}\widehat{H}u_{T}^{3}-\frac{1}{4}\overline{H}u_{T},
\end{equation}
where $\widehat{H}$ and $\overline{H}$ are the mean curvatures with respect to $\widehat{g}_{T}$ and $\overline{g}$, respectively. It
follows that
\begin{align}\label{30}
\begin{split}
\mathcal{P}(v_{T})
&\geq\int_{\overline{M}_{T}}\frac{1}{4}|\overline{\nabla}v_{T}|^{2}
+\left(\frac{1}{8}(\mu-|J|)+\frac{1}{16}|q|_{\overline{g}}^{2}\right)(1+v_{T})^{2}\\
& +\int_{\partial\overline{M}_{T}}\frac{1}{8}(q(\overline{N})-\overline{H})(1+v_{T})^{2}+\frac{1}{8}\widehat{H}(1+v_{T})^{4}.
\end{split}
\end{align}
The quantity $\mathcal{P}(v_{T})$ appears in the formula for the ADM energy of the metric $\widehat{g}_{T}$, more precisely
$E_{ADM}(\widehat{g}_{T})=E_{ADM}(\overline{g})-\pi^{-1}\mathcal{P}(v_{T})$.
Therefore it is important to estimate $\mathcal{P}(v_{T})$ from below.

\begin{lemma}\label{lemma1}
Let $u_{T}=1+v_{T}$ be the function produced in Theorem \ref{thm2}, then
\begin{equation}\label{31}
\mathcal{P}(v_{T})\geq \int_{\overline{M}_{T}}\frac{1}{4}|\overline{\nabla}v_{T}|^{2}
+\left(\frac{1-\vartheta_{T}}{2}\right)
\sum_{i=1}^{n}\sqrt{\frac{\pi}{|\partial_{i}\overline{M}_{T}|_{\overline{g}}}}
\int_{\partial_{i}\overline{M}_{T}}(1+v_{T})^{2}
\end{equation}
where $\vartheta_{T}\rightarrow 0$ as $T\rightarrow\infty$.
\end{lemma}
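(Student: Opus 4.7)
The plan is to start from inequality \eqref{30}, drop the nonnegative bulk contribution $\int_{\overline{M}_{T}}\bigl(\frac{1}{8}(\mu-|J|)+\frac{1}{16}|q|_{\overline{g}}^{2}\bigr)(1+v_{T})^{2}$ (nonnegative by the dominant energy condition and pointwise positivity of $|q|^{2}$), and reduce the lemma to a boundary estimate. What remains is to show that the total boundary contribution in \eqref{30} dominates $\frac{1-\vartheta_{T}}{2}\sum_{i=1}^{n}\sqrt{\pi/|\partial_{i}\overline{M}_{T}|_{\overline{g}}}\int_{\partial_{i}\overline{M}_{T}}(1+v_{T})^{2}$ for some $\vartheta_{T}\to 0$. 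I would split the boundary into case 1 components ($i=1,\ldots,m$, on which the boundary condition \eqref{21} has been imposed) and case 2 components ($i=m+1,\ldots,n$).

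The bulk of the work will be on case 1 components. Condition \eqref{21} is equivalent to $\widehat{H}=\sqrt{16\pi/|\partial_{i}\overline{M}_{T}|_{\widehat{g}_{T}}}$, hence
\[
\int_{\partial_{i}\overline{M}_{T}}\tfrac{1}{8}\widehat{H}(1+v_{T})^{4}=\frac{\sqrt{\pi}}{2}\sqrt{|\partial_{i}\overline{M}_{T}|_{\widehat{g}_{T}}}=\frac{\sqrt{\pi}}{2}\left(\int_{\partial_{i}\overline{M}_{T}}(1+v_{T})^{4}\right)^{1/2},
\]
and Jensen's inequality \eqref{26} produces
\[
\int_{\partial_{i}\overline{M}_{T}}\tfrac{1}{8}\widehat{H}(1+v_{T})^{4}\geq\frac{1}{2}\sqrt{\frac{\pi}{|\partial_{i}\overline{M}_{T}|_{\overline{g}}}}\int_{\partial_{i}\overline{M}_{T}}(1+v_{T})^{2}.
\]
The remaining boundary piece $\frac{1}{8}(q(\overline{N})-\overline{H})(1+v_{T})^{2}$ will be controlled in $L^{\infty}$: on case 1 components one has $|q(\overline{N})|\leq|q|_{\overline{g}}=\chi_{T}\to\chi\equiv 0$ uniformly (by the definition of case 1), and $\|\overline{H}\|_{L^{\infty}(\partial_{i}\overline{M}_{T})}\to 0$ by the Bray--Khuri estimate \cite{BrayKhuri, BrayKhuri1}. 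Thus this residual contributes at most $\epsilon_{T}\int_{\partial_{i}\overline{M}_{T}}(1+v_{T})^{2}$ with $\epsilon_{T}\to 0$, yielding the desired case 1 lower bound with multiplicative correction $(1-\delta_{T}^{(i)})$, $\delta_{T}^{(i)}\to 0$.

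For case 2 components the asymptotic form $u_{T}\sim e^{\mp\sqrt{\lambda_{i}}t}\phi_{i}(y)$ from \eqref{22.1} forces $\int_{\partial_{i}\overline{M}_{T}}(1+v_{T})^{2}=O(e^{-2\sqrt{\lambda_{i}}T})$ and a similarly exponentially small contribution to \eqref{30}. Since the case 1 portion of the RHS of \eqref{31} is bounded below by a positive constant uniformly in $T$ (using the strict positivity of $u_{T}$ on $\partial_{i}\overline{M}_{T}$ from Theorem \ref{thm2} together with the normalization in \eqref{21}), the exponentially small case 2 discrepancies can be absorbed into a slightly enlarged $\vartheta_{T}$. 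Summing over $i$ will complete the proof. The main technical obstacle will be this case 2 absorption: because the multiplicative factor $(1-\vartheta_{T})$ acts on the entire sum on the RHS, one must carefully verify that case 2 defects are dominated by the case 1 piece; in the exceptional setting where no case 1 component is present, one instead leverages the strictly positive bulk term $\int\frac{1}{16}|q|_{\overline{g}}^{2}(1+v_{T})^{2}$ (nontrivial in pure case 2) to play this dominating role.
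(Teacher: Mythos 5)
Your treatment of the case 1 components ($i=1,\ldots,m$) is correct and is exactly the paper's argument: the Robin condition \eqref{21} turns the $\widehat{H}$-term into $\frac{\sqrt{\pi}}{2}\left(\int_{\partial_{i}\overline{M}_{T}}(1+v_{T})^{4}\right)^{1/2}$, Jensen's inequality \eqref{26} converts this into the desired quantity, and the residual $\frac{1}{8}(q(\overline{N})-\overline{H})(1+v_{T})^{2}$ is killed because $q(\overline{N})\to 0$ (definition of case 1) and $\overline{H}\to 0$ (Bray--Khuri). This reproduces \eqref{32}.

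Your case 2 argument, however, has a genuine gap. On a case 2 component the defect $\left|\int_{\partial_{i}\overline{M}_{T}}\frac{1}{8}(q(\overline{N})-\overline{H})(1+v_{T})^{2}\right|$ is indeed exponentially small in absolute terms, but it is of the \emph{same order} as the required contribution $\frac{1}{2}\sqrt{\pi/|\partial_{i}\overline{M}_{T}|_{\overline{g}}}\int_{\partial_{i}\overline{M}_{T}}(1+v_{T})^{2}$ --- both are constant multiples of $\int_{\partial_{i}\overline{M}_{T}}(1+v_{T})^{2}$, and the constant in the defect does not tend to zero (this is precisely the failure of Lemma 2.2 of \cite{Khuri} flagged in the erratum). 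So the defect cannot be absorbed into a $(1-\vartheta_{T})$ factor on that component and must be dominated by something else. Your proposal to dominate it by a case 1 ``surplus'' fails on two counts: there may be no case 1 components at all ($m=0$), and even when $m\geq 1$ your claim that the case 1 boundary integrals are bounded below by a positive constant uniformly in $T$ is unjustified --- strict positivity of $u_{T}$ for each fixed $T$ gives nothing uniform, the Robin condition \eqref{21} is essentially scale-invariant in $u_{T}$ and does not normalize its size, and the expected behavior is $u_{T}\sim e^{-\sqrt{\lambda_{i}}T}$ at $\partial_{i}\overline{M}_{T}$, so the case 1 integrals also decay exponentially at rates not comparable a priori to those of the case 2 defects. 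The correct mechanism (and the paper's) is the one you relegate to the ``exceptional setting'': since $\chi\not\equiv 0$, one has $|q|_{\overline{g}}\geq\varepsilon$ on a fixed set $\Omega_{i}$ of positive measure at infinitely many heights $T_{j}$, and the cylindrical asymptotics \eqref{22.1} let one bound the bulk term $\int_{(\overline{M}_{T}-\overline{M}_{T_{0}})_{i}}\frac{1}{16}|q|_{\overline{g}}^{2}(1+v_{T})^{2}$ below by $C_{1}\mathcal{N}(T-T_{0})\sqrt{\pi/|\partial_{i}\overline{M}_{T}|_{\overline{g}}}\int_{\partial_{i}\overline{M}_{T}}(1+v_{T})^{2}$ with $\mathcal{N}(T-T_{0})\to\infty$; this single quantity dominates both the required contribution and the bounded boundary defect, componentwise and for every $m$. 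That counting argument is the substantive content of the case 2 estimate and is missing from your proposal.
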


\begin{proof}
There are two cases to consider.\medskip

\noindent\textit{Case 1: $\chi\equiv 0$.}\medskip

This case corresponds to the boundary components $\partial_{i}M$, $i=1,\ldots,m$. Here the methods of the proof of Theorem \ref{thm2} apply to yield
\begin{align}\label{32}
\begin{split}
& \int_{\partial_{i}\overline{M}_{T}}\frac{1}{8}(q(\overline{N})-\overline{H})(1+v_{T})^{2}+\frac{1}{8}\widehat{H}(1+v_{T})^{4}\\
\geq &\left(\frac{1-\vartheta_{T}}{2}\right)\sqrt{\frac{\pi}{|\partial_{i}\overline{M}_{T}|_{\overline{g}}}}\int_{\partial_{i}\overline{M}_{T}}(1+v_{T})^{2},
\end{split}
\end{align}
for some constants $\vartheta_{T}\rightarrow 0$ as $T\rightarrow\infty$.\medskip

\noindent\textit{Case 2: $\chi$ does not vanish identically.}\medskip

This case corresponds to the boundary components $\partial_{i}M$, $i=m+1,\ldots,n$. For each such component there is a set of positive measure $\Omega_{i}\subset\partial_{i}M$ on which
$\chi_{T_{j}}\geq 2\varepsilon>0$ for a subsequence of heights $T_{j}\rightarrow\infty$. Let $(\overline{M}_{T}-\overline{M}_{T'})\cap\Omega_{i}$ denote the portion
of $(\overline{M}_{T}-\overline{M}_{T'})_{i}$ which, after projection onto the vertical cylinder over $\partial_{i}M$, corresponds with $\Omega_{i}\times(T',T)$. Similarly let
$\partial_{i}\overline{M}_{T}\cap\Omega_{i}$ denote the portion of $\partial_{i}\overline{M}_{T}$ which, after projection onto the vertical cylinder over $\partial_{i}M$, corresponds with
$\Omega_{i}\times \{T\}$. Since $|q|_{\overline{g}}$ is uniformly bounded in $C^{1}$, there is a $\delta>0$ independent of $j$, such that
$|q|_{\overline{g}}\geq \varepsilon$ on $(\overline{M}_{T_{j}+\delta}-\overline{M}_{T_{j}-\delta})\cap\Omega_{i}$ for each $j$. If $\mathcal{N}(T-T_{0})$ denotes the number of $T_{j}$ in the
interval $(T_{0},T)$, then using the asymptotics of $u_{T}$, it follows that for sufficiently large $T$ and $T_{0}$ we have
\begin{align}\label{33}
\begin{split}
\int_{(\overline{M}_{T}-\overline{M}_{T_{0}})_{i}}|q|_{\overline{g}}^{2}(1+v_{T})^{2}&\geq \int_{(\overline{M}_{T}-\overline{M}_{T_{0}})\cap\Omega_{i}}|q|_{\overline{g}}^{2}(1+v_{T})^{2}\\
&\geq\varepsilon^{2}\sum_{j=0}^{\mathcal{N}(T-T_{0})}\int_{(\overline{M}_{T_{j}+\delta}-\overline{M}_{T_{j}-\delta})\cap\Omega_{i}}(1+v_{T})^{2}\\
&\geq\delta\varepsilon^{2}\mathcal{N}(T-T_{0})\int_{\partial_{i}\overline{M}_{T}\cap\Omega_{i}}(1+v_{T})^{2}.
\end{split}
\end{align}
Note that in the last step in the above sequence of inequalities, the factor $\delta$ can be pulled out in light of \eqref{22.1} and the fact that the metric on $(\overline{M}_{T_{j}+\delta}-\overline{M}_{T_{j}-\delta})\cap\Omega_{i}$
approximates the product metric on $\Omega_{i}\times(T_{j}-\delta,T_{j}+\delta)$. Furthermore, using \eqref{22.1} again yields
\begin{equation}\label{34}
\int_{\partial_{i}\overline{M}_{T}\cap\Omega_{i}}(1+v_{T})^{2}\geq C_{0}\int_{\partial_{i}\overline{M}_{T}}(1+v_{T})^{2},
\end{equation}
for some positive constant $C_{0}$ independent of $T$. Hence
\begin{equation}\label{35}
\int_{(\overline{M}_{T}-\overline{M}_{T_{0}})_{i}}\frac{1}{16}|q|_{\overline{g}}^{2}(1+v_{T})^{2}\geq
C_{1}\mathcal{N}(T-T_{0})\sqrt{\frac{\pi}{|\partial_{i}\overline{M}_{T}|_{\overline{g}}}}\int_{\partial_{i}\overline{M}_{T}}(1+v_{T})^{2}.
\end{equation}
From \cite{SchoenYau} (page 257)
\begin{equation}\label{36}
\widehat{g}_{T}\sim\phi_{i}^{4}(y)(d\rho^{2}+4\lambda_{i}\rho^{2}d\theta^{2})
\end{equation}
for $\rho$ near zero, where $\rho=(2\sqrt{\lambda_{i}})^{-1}e^{\mp2\sqrt{\lambda_{i}}t}$ and $d\theta^{2}$ is the induced metric
on $\partial_{i}M$. Therefore
\begin{equation}\label{37}
\widehat{H}\sim\phi_{i}^{-2}(y)\left(\frac{2}{\rho}\right)\sim 4\sqrt{\lambda_{i}}u_{T}^{-2}=4\sqrt{\lambda_{i}}(1+v_{T})^{-2}.
\end{equation}
By applying \eqref{35}, and using the fact that $q(\overline{N})-\overline{H}$ is uniformly bounded and $\mathcal{N}(T-T_{0})\rightarrow\infty$ as $T\rightarrow\infty$, we then have
\begin{align}\label{38}
\begin{split}
& \int_{(\overline{M}_{T}-\overline{M}_{T_{0}})_{i}}\frac{1}{16}|q|_{\overline{g}}^{2}(1+v_{T})^{2}
+\int_{\partial_{i}\overline{M}_{T}}\frac{1}{8}(q(\overline{N})-\overline{H})(1+v_{T})^{2}+\frac{1}{8}\widehat{H}(1+v_{T})^{4}\\
\geq & C_{1}\mathcal{N}(T-T_{0})\sqrt{\frac{\pi}{|\partial_{i}\overline{M}_{T}|_{\overline{g}}}}\int_{\partial_{i}\overline{M}_{T}}(1+v_{T})^{2}
-C_{2}\int_{\partial_{i}\overline{M}_{T}}(1+v_{T})^{2}\\
\geq &\left(1-C_{3}\mathcal{N}(T-T_{0})^{-1}\right)\sqrt{\frac{\pi}{|\partial_{i}\overline{M}_{T}|_{\overline{g}}}}\int_{\partial_{i}\overline{M}_{T}}(1+v_{T})^{2}
\end{split}
\end{align}
for $T$ sufficiently large so that $C_{1}\mathcal{N}(T-T_{0})\geq 1$.

We may now combine \eqref{30}, \eqref{32}, and \eqref{38} to obtain the desired result.
\end{proof}

\section{Proof of the Main Theorem}\label{sec3}

Consider the manifold $(\widehat{M}_{T},\widehat{g}_{T})$. Along the infinite cylindrical ends over $\partial_{i}M$, $i=m+1,\ldots,n$, the
conformal factor $u_{T}$ decays exponentially fast. Therefore as in \cite{SchoenYau} these ends may be closed by adding a point at infinity. The remaining cylindrical ends, indexed by
$i=1,\ldots,m$, correspond to the boundary components of $\widehat{M}_{T}$ which satisfy the hypotheses of Herzlich's version of the positive mass theorem \cite{Herzlich}. Alternatively,
these boundary components have zero Hawking mass
\begin{equation}\label{39}
M_{H}(\partial_{i}\widehat{M}_{T}):=\sqrt{\frac{|\partial_{i}\widehat{M}_{T}|_{\widehat{g}}}{16\pi}}\left(1-\frac{1}{16\pi}\int_{\partial_{i}\widehat{M}_{T}}\widehat{H}^{2}\right)= 0.
\end{equation}
It follows that the ADM energy $E_{ADM}(\widehat{g}_{T})$ is nonnegative. Combining this with the lower bound for $\mathcal{P}(v_{T})$ yields a lower bound for the ADM energy of $g$, from the formula
\begin{equation}\label{40}
E_{ADM}(g)=E_{ADM}(\overline{g})=E_{ADM}(\widehat{g}_{T})+\pi^{-1}\mathcal{P}(v_{T}).
\end{equation}

We now estimate the positive contributions from both terms on the right-hand side of \eqref{40}. Let us begin with $E_{ADM}(\widehat{g}_{T})$. Consider a connected surface $S$ which encloses
$\partial M$, is area outerminimizing, and has nonnegative Hawking mass all with respect to $\widehat{g}_{T}$. Let $\{S_{\varrho}\}_{\varrho=0}^{\infty}$ be a weak inverse mean curvature flow
(see \cite{HuiskenIlmanen}) emanating from $S=S_{0}$. Then according to Geroch monotonicity \cite{HuiskenIlmanen}
\begin{equation}\label{41}
E_{ADM}(\widehat{g}_{T})\geq\int_{0}^{\infty}\left(\frac{|S_{\varrho}|_{\widehat{g}_{T}}^{1/2}}{(16\pi)^{3/2}}\int_{S_{\varrho}}\widehat{R}_{T}d\theta_{\widehat{g}_{T}}\right)d\varrho,
\end{equation}
where $\widehat{R}_{T}$ is the scalar curvature of $\widehat{g}_{T}$. Let $\widehat{N}_{T}$ denote the unit normal to $S_{\varrho}$ with respect to $\widehat{g}_{T}$, and set $\widehat{E}_{T}^{i}=u_{T}^{-4}\overline{E}^{i}$.
By \eqref{16} $\widehat{R}_{T}=2|\widehat{E}_{T}|_{\widehat{g}_{T}}^{2}$, and with the help of Cauchy-Schwarz, \eqref{10}, and H\"{o}lder's inequality
\begin{align}\label{42}
\begin{split}
\int_{S_{\varrho}}|\widehat{E}_{T}|_{\widehat{g}_{T}}^{2}d\theta_{\widehat{g}_{T}}&\geq\int_{S_{\varrho}}\widehat{g}_{T}(\widehat{E}_{T},\widehat{N}_{T})^{2}d\theta_{\widehat{g}_{T}}\\
&=\int_{S_{\varrho}}\overline{g}(\overline{E},\overline{N})^{2}d\theta_{\overline{g}}\\
&\geq|S_{\varrho}|_{\overline{g}}^{-1}\left(\int_{S_{\varrho}}\overline{g}(\overline{E},\overline{N})d\theta_{\overline{g}}\right)^{2}\\
&=\frac{(4\pi\overline{Q})^{2}}{|S_{\varrho}|_{\overline{g}}}\\
&=\frac{|S_{\varrho}|_{\widehat{g}_{T}}}{|S_{\varrho}|_{\overline{g}}}\frac{(4\pi Q)^{2}}{|S_{\varrho}|_{\widehat{g}_{T}}}.
\end{split}
\end{align}
A basic property of inverse mean curvature flow is that the area of the flow surfaces increases exponentially, in particular $|S_{\varrho}|_{\widehat{g}_{T}}=|S_{0}|_{\widehat{g}_{T}}e^{\varrho}$.
Moreover, if $M_{S}$ denotes the region between spatial infinity and the surface $S$, then
\begin{equation}\label{43}
\frac{|S_{\varrho}|_{\widehat{g}_{T}}}{|S_{\varrho}|_{\overline{g}}}\geq\min_{M_{S}}u_{T}^{4}.
\end{equation}
It follows that
\begin{equation}\label{44}
E_{ADM}(\widehat{g}_{T})\geq\left(\sqrt{\frac{|\partial M|_{g}}{|S|_{\widehat{g}_{T}}}}\min_{M_{S}}u_{T}^{4}\right)\sqrt{\frac{\pi}{|\partial M|_{g}}}Q^{2}.
\end{equation}
Since this inequality is true for all surfaces $S$ which enclose
$\partial M$, are area outerminimizing, and have nonnegative Hawking mass all with respect to $\widehat{g}_{T}$, we then have
\begin{equation}\label{45}
E_{ADM}(\widehat{g}_{T})\geq\sigma_{2,T}\sqrt{\frac{\pi}{|\partial M|_{g}}}Q^{2}
\end{equation}
where
\begin{equation}\label{46}
\sigma_{2,T}=\sup_{S}\sqrt{\frac{|\partial M|_{g}}{|S|_{\widehat{g}_{T}}}}\min_{M_{S}}u_{T}^{4}.
\end{equation}
Note that the set of surfaces $S$ which have the above desired properties is nonempty. To see this we may simply start an inverse mean curvature flow from one of the
boundary components $\partial_{i}\widehat{M}_{T}$, $i=1,\ldots,m$, then for sufficiently large $\varrho$, each of the flow surfaces $S_{\varrho}$ encloses $\partial M$,
is area outerminimizing, and has nonnegative Hawking mass. In particular, $\sigma_{2,T}$ is strictly positive.

The positive contribution from $\mathcal{P}(v_{T})$ will now be estimated. Suppose that
$\mathcal{P}(v_{T})\leq\eta\sum_{i=1}^{n}\sqrt{\pi|\partial_{i}\overline{M}_{T}|_{\overline{g}}}$
for some positive constant $\eta$.  Then by Lemma \ref{lemma1}
\begin{equation}\label{47}
\int_{\overline{M}_{T}}\frac{1}{4}|\overline{\nabla}v_{T}|^{2}
+\left(\frac{1-\vartheta_{T}}{2}\right)
\sum_{i=1}^{n}\sqrt{\frac{\pi}{|\partial_{i}\overline{M}_{T}|_{\overline{g}}}}
\int_{\partial_{i}\overline{M}_{T}}(1+v_{T})^{2}
\leq\eta\sum_{i=1}^{n}\sqrt{\pi|\partial_{i}\overline{M}_{T}|_{\overline{g}}}.
\end{equation}
However by Young's inequality
\begin{equation}\label{48}
(1+v_{T})^{2}\geq 1-\frac{1}{\delta}+(1-\delta)v_{T}^{2}
\end{equation}
for any $\delta>0$, and therefore
\begin{align}\label{49}
\begin{split}
&\int_{\overline{M}_{T}}\frac{1}{4}|\overline{\nabla}v_{T}|^{2}+(1-\delta)
\left(\frac{1-\vartheta_{T}}{2}\right)\sum_{i=1}^{n}\sqrt{\frac{\pi}
{|\partial_{i}\overline{M}_{T}|_{\overline{g}}}}\int_{\partial_{i}\overline{M}_{T}}v_{T}^{2}\\
\leq &(\eta-\frac{1}{2}(1-\delta^{-1})(1-\vartheta_{T}))
\sum_{i=1}^{n}\sqrt{\pi|\partial_{i}\overline{M}_{T}|_{\overline{g}}}.
\end{split}
\end{align}
The left-hand side is nonnegative if
$\delta-1\leq\sigma_{1,T}$ where
\begin{equation}\label{50}
\sigma_{1,T}=
\frac{\int_{\overline{M}_{T}}|\overline{\nabla}v_{T}|^{2}}{2(1-\vartheta_{T})
\sum_{i=1}^{n}\sqrt{\frac{\pi}{|\partial_{i}\overline{M}_{T}|_{\overline{g}}}}
\int_{\partial_{i}\overline{M}_{T}}v_{T}^{2}}.
\end{equation}
It follows that $\eta\geq\delta^{-1}(\delta-1)(1-\vartheta_{T})/2$ for all such
$\delta$.  In particular by choosing $\delta=1+\sigma_{1,T}$ we
conclude that
\begin{equation}\label{51}
\mathcal{P}(v_{T})\geq\frac{\sigma_{1,T}(1-\vartheta_{T})}{2(1+\sigma_{1,T})}\sum_{i=1}^{n}\sqrt{\pi
|\partial_{i}\overline{M}_{T}|_{\overline{g}}}.
\end{equation}

The combination of \eqref{40}, \eqref{45}, and \eqref{51} now produces
\begin{equation}\label{52}
E_{ADM}(g)\geq\frac{\sigma_{1,T}(1-\vartheta_{T})}{2(1+\sigma_{1,T})}\sum_{i=1}^{n}\sqrt{\frac{|\partial_{i}
\overline{M}_{T}|_{\overline{g}}}{\pi}} +\sigma_{2,T}\sqrt{\frac{\pi}{|\partial M|_{g}}}Q^{2}.
\end{equation}

\begin{theorem}\label{thm3}
After possibly passing to a subsequence, $u_{T}\rightarrow u$ in $C^{\infty}_{loc}(\overline{M})$ as $T\rightarrow\infty$, where $u$
is the unique solution of boundary value problem \eqref{19}, \eqref{19.1}.
\end{theorem}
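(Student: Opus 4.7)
The plan is to derive uniform local regularity for the family $\{u_T\}$, extract a $C^\infty_{\mathrm{loc}}$ subsequential limit via Arzel\`{a}--Ascoli, identify this limit as a solution of \eqref{19} satisfying \eqref{19.1}, and then invoke uniqueness to promote subsequential convergence to convergence of the full sequence. For uniform regularity, the nonnegativity of the functional $P$ together with the lower bound \eqref{25} (both established in the proof of Theorem \ref{thm2}) controls $\|\overline{\nabla}v_T\|_{L^2(\widehat{M}_T)}$ uniformly in $T$; combined with the weighted space structure \eqref{24} and the far-field normalization $u_T\to 1$, this yields uniform $W^{1,2}_{\mathrm{loc}}$ bounds on $v_T$. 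A maximum-principle comparison against the Schoen--Yau conformal factor $z$ associated with the $|\overline{E}|^2$-free analog of \eqref{19} (or a suitable scalar multiple of it) then delivers a uniform $L^\infty$ bound for $u_T$ on any compact subset of $\overline{M}$. Since \eqref{20} is linear elliptic with smooth coefficients $\overline{R}$ and $|\overline{E}|_{\overline{g}}^2$, interior $L^p$ and Schauder estimates upgrade these to uniform $C^{k,\alpha}_{\mathrm{loc}}$ bounds, so a diagonal extraction produces $u_{T_j}\to u_\infty$ in $C^\infty_{\mathrm{loc}}(\overline{M})$ with $u_\infty\ge 0$ solving \eqref{19} pointwise in the interior.

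Next I would verify the boundary behavior of $u_\infty$. At spatial infinity the normalization $u_T\to 1$ and the weighted decay built into $\mathcal{W}$ pass to the limit and give the expansion \eqref{15}. At the case 2 components $\partial_i M$, $i=m+1,\ldots,n$, the separation-of-variables asymptotic \eqref{22.1} holds uniformly in $T$, so $u_T\to 0$ on compact pieces of the corresponding cylindrical end in the limit. The subtle case is $i=1,\ldots,m$, where the approximating problem carries the Robin condition \eqref{21} on the moving face $\partial_i\overline{M}_T$ rather than a Dirichlet condition at $\partial_i M$. Here I would construct uniform-in-$T$ exponentially decaying supersolutions on the Jang neck modeled on the Schoen--Yau eigenfunctions $e^{\mp\sqrt{\lambda_i}t}\phi_i(y)$, exploiting that $\overline{H}\to 0$ along these ends and that $|\partial_i\overline{M}_T|_{\widehat{g}_T}^{-1/2}$ remains controlled, so that the maximum principle forces $u_T$ to decay exponentially as one moves from $\partial_i\overline{M}_T$ toward $\partial_i M$, uniformly in $T$. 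Passing to the limit then yields $u_\infty\to 0$ at $\partial_i M$, so $u_\infty$ satisfies both halves of \eqref{19.1}.

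Uniqueness for \eqref{19}, \eqref{19.1} follows from standard arguments for such elliptic boundary value problems (and is already invoked in the Schoen--Yau discussion preceding \eqref{19}), so every $C^\infty_{\mathrm{loc}}$ subsequential limit coincides with $u$ and the full sequence therefore converges. The main obstacle I expect is the case 1 boundary analysis above: the Dirichlet decay at $\partial_i M$ for $i\le m$ is never directly imposed on $u_T$, so it must be recovered from the degenerating Robin condition \eqref{21} together with the elongated cylindrical geometry of the Jang neck. Producing supersolutions that are uniform in $T$ and adapted to the Schoen--Yau separation of variables, while accommodating the shifting boundary data on $\partial_i\overline{M}_T$, is the principal technical difficulty.
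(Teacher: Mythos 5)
Your overall architecture (uniform local estimates, diagonal extraction, identification of the limit, uniqueness) matches the paper's, but the step that launches it contains a genuine gap. You assert that the nonnegativity of the functional $P$ together with the lower bound \eqref{25} controls $\parallel\overline{\nabla}v_{T}\parallel_{L^{2}}$ uniformly in $T$. Both of those facts are \emph{lower} bounds for $P$; two lower bounds say nothing about the size of $P(v_{T})$ and hence nothing about the Dirichlet energy of the minimizer, unless you also supply a $T$-independent \emph{upper} bound for $P(v_{T})$ (e.g.\ via a fixed admissible competitor of uniformly bounded energy --- not immediate, since the constant function is not in $\mathcal{W}$ and $\overline{R}$ need not be integrable along the cylindrical necks). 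The paper obtains the uniform $W^{1,2}_{loc}$ bound by an entirely different mechanism: the flux quantity $\mathcal{P}(v_{T})$ of \eqref{28} (not the functional $P$) satisfies $\mathcal{P}(v_{T})=\pi\left(E_{ADM}(g)-E_{ADM}(\widehat{g}_{T})\right)\leq\pi E_{ADM}(g)$ by the mass decomposition \eqref{40} together with the nonnegativity of $E_{ADM}(\widehat{g}_{T})$, which comes from the positive mass theorem with zero-Hawking-mass boundary (see \eqref{39} and \eqref{45}); meanwhile Lemma \ref{lemma1} bounds $\mathcal{P}(v_{T})$ from below by $\tfrac{1}{4}\int_{\overline{M}_{T}}|\overline{\nabla}v_{T}|^{2}$. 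The combination of \eqref{31}, \eqref{40}, and \eqref{45} is the ingredient your argument is missing, and without it the compactness step has no a priori estimate to stand on.

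A secondary remark: the difficulty you single out as principal --- recovering Dirichlet decay at $\partial_{i}M$, $i\leq m$, from the degenerating Robin condition \eqref{21} by uniform-in-$T$ barriers --- is circumvented in the paper. One does not need uniform decay of $u_{T}$ along the case 1 necks; it suffices that the limit $u_{\infty}$ is a \emph{bounded} solution of \eqref{19} on the entire Jang surface, including the now-infinite case 1 ends, after which the Schoen--Yau comparison with bounded solutions of the same equation on the model cylinder (exploiting the positivity of the first eigenvalue $\lambda_{i}$ of the cross-sectional operator) forces $u_{\infty}\rightarrow 0$ at every component of $\partial M$, with exponential rate. Your barrier construction is not wrong in spirit, but it attacks a harder problem than the theorem requires; the real work is in establishing that $u_{\infty}$ is globally bounded along the necks, which again traces back to the estimates of Lemma \ref{lemma1}.
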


\begin{proof}
Together \eqref{31}, \eqref{40}, and \eqref{45} show that the sequence of
functions $\{u_{T}\}$ is uniformly bounded in
$W^{1,2}_{loc}(\overline{M})$. Thus with the help of elliptic
estimates and Sobolev embeddings, a subsequence converges on
compact subsets to a smooth uniformly bounded solution
$u_{\infty}$ of
\begin{equation}\label{53}
\overline{\Delta}u_{\infty}-\frac{1}{8}\overline{R}u_{\infty}+\frac{1}{4}|\overline{E}|^{2}_{\overline{g}}u_{\infty}=0\text{
}\text{ }\text{ on }\text{ }\text{ }\overline{M},\text{ }\text{
}\text{ }\text{
}u_{\infty}=1+\frac{A_{\infty}}{|x|}+O(|x|^{-2})\text{ }\text{
}\text{ as }\text{ }\text{ }|x|\rightarrow\infty.
\end{equation}
Moreover since $\overline{M}$ approximates a cylinder on regions
where it blows-up, comparison with a bounded solution of the same
equation on the cylinder (as is done in \cite{SchoenYau}) shows that
$u_{\infty}(x)\rightarrow 0$ as $x\rightarrow\partial M$; in fact
the decay rate is of exponential strength. Thus $u_{\infty}$ satisfies boundary
value problem \eqref{19}, \eqref{19.1}, and therefore must coincide with the unique solution
to this problem $u=u_{\infty}$.
\end{proof}

Theorem \ref{thm3} shows that after passing to a subsequence, $\sigma_{1,T}\rightarrow\sigma_{1}$ and $\sigma_{2,T}\rightarrow\sigma_{2}$
as $T\rightarrow\infty$. Theorem \ref{thm1} now follows from \eqref{52}.

Lastly we analyze what happens when equality occurs in Theorem \ref{thm1}.  By slightly modifying the arguments presented, we find that equality in \eqref{17} implies that
\begin{equation}\label{54}
\int_{\overline{M}}|\overline{\nabla}u|^{2}=0,
\end{equation}
and therefore $u$ must be constant.  However this is
impossible since
\begin{equation}\label{55}
u(x)\rightarrow\begin{cases}
1 & \text{as $|x|\rightarrow\infty$},\\
0 & \text{as $x\rightarrow\partial M$}.
\end{cases}
\end{equation}
We conclude that the case of equality cannot occur.

\section{Further Properties of the Constant $\sigma_{1}$}\label{sec4}

In this section it will be shown how the constant $\sigma_{1}$ of Theorem \ref{thm1} may be redefined as an infimum over conformal factors
which satisfy appropriate asymptotics. We also describe how the area of each boundary component $\partial_{i}M$, naturally arises and makes a definite contribution
to the right-hand side of \eqref{17}. 

First observe that a slight improvement of the estimate in Lemma \ref{lemma1} is possible, by utilizing all of the terms in \eqref{30}, namely
\begin{align}\label{56}
\begin{split}
\mathcal{P}(v_{T})&\geq \int_{\overline{M}_{T}}\frac{1}{4}|\overline{\nabla}v_{T}|^{2}+\left(\frac{1}{8}(\mu-|J|_{g})+\frac{1}{20}|q|_{\overline{g}}^{2}\right)(1+v_{T})^{2}\\
&+\left(\frac{1-\vartheta_{T}}{2}\right)
\sum_{i=1}^{n}\sqrt{\frac{\pi}{|\partial_{i}\overline{M}_{T}|_{\overline{g}}}}
\int_{\partial_{i}\overline{M}_{T}}(1+v_{T})^{2}.
\end{split}
\end{align}
By following the arguments in Section \ref{sec3}, we obtain a lower bound of the form
\begin{equation}\label{57}
\mathcal{P}(v)\geq\frac{\overline{\sigma}_{1,T}(1-\vartheta_{T})}{2(1+\overline{\sigma}_{1,T})}\sum_{i=1}^{n}\sqrt{\pi
|\partial_{i}\overline{M}_{T}|_{\overline{g}}},
\end{equation}
where
\begin{equation}\label{58}
\overline{\sigma}_{1,T}=
\frac{\int_{\overline{M}_{T}}|\overline{\nabla}v_{T}|^{2}+\left(\frac{1}{2}(\mu-|J|_{g})+\frac{1}{5}|q|_{\overline{g}}^{2}\right)(1+v_{T})^{2}}{2(1-\vartheta_{T})
\sum_{i=1}^{n}\sqrt{\frac{\pi}{|\partial_{i}\overline{M}_{T}|_{\overline{g}}}}
\int_{\partial_{i}\overline{M}_{T}}v_{T}^{2}}.
\end{equation}
It follows that if $\overline{\sigma}_{1,T}\rightarrow\overline{\sigma}_{1}$ then
\begin{align}\label{59}
\begin{split}
\overline{\sigma}_{1}
&=\left(\sum_{i=1}^{n}\sqrt{4\pi|\partial_{i}M|_{g}}\right)^{-1}\int_{\overline{M}}|\overline{\nabla}u|^{2}+\left(\frac{1}{2}(\mu-|J|_{g})+\frac{1}{5}|q|_{\overline{g}}^{2}\right)u^{2}\\
&\geq\left(\sum_{i=1}^{n}\sqrt{4\pi|\partial_{i}M|_{g}}\right)^{-1}\inf_{w}\int_{\overline{M}}|\overline{\nabla}w|^{2}+\left(\frac{1}{2}(\mu-|J|_{g})+\frac{1}{5}|q|_{\overline{g}}^{2}\right)w^{2},
\end{split}
\end{align}
where the infimum is taken over all smooth functions $w$ satisfying the following asymptotics
\begin{equation}\label{60}
w\sim e^{\mp\sqrt{\kappa_{i}}t}\text{ }\text{ }\text{ as }\text{ }\text{ }x\rightarrow\partial_{i}M,\text{ }\text{ }\text{ }\text{ }w\rightarrow 1\text{ }\text{ }\text{ as }\text{ }\text{ }
|x|\rightarrow\infty,
\end{equation}
for some positive constants $\kappa_{i}\leq2\lambda_{i}$; here $\lambda_{i}$, as in Section \ref{sec2}, is the first eigenvalue of
$\Delta-\frac{1}{8}K+\frac{1}{4}(E\cdot n)^{2}$ on $\partial_{i}M$. Therefore we may redefine the constant $\sigma_{1}$ appearing in Theorem \ref{thm1} to be given by the infimum on the right-hand
side of \eqref{59}.

The infimum is realized by the unique (positive) solution of the equation
\begin{equation}\label{61}
\overline{\Delta}w-\left[\frac{1}{2}(\mu-|J|_{g})+\frac{1}{5}|q|_{\overline{g}}^{2}\right]w=0\text{ }\text{ }\text{ on }\text{ }\text{ }\overline{M},
\end{equation}
with asymptotics
\begin{equation}\label{62}
e^{\pm\sqrt{\gamma_{i}}t}w\rightarrow\psi_{i}\text{ }\text{ }\text{ as }\text{ }\text{ }x\rightarrow\partial_{i}M,\text{ }\text{ }\text{ }\text{ }w\rightarrow 1\text{ }\text{ }\text{ as }\text{ }\text{ }
|x|\rightarrow\infty,
\end{equation}
where $\gamma_{i}$ is the first eigenvalue and $\psi_{i}$ the corresponding eigenfunction for the operator $\Delta-\left[\frac{1}{2}(\mu-|J|_{g})+\frac{1}{5}\chi^{2}\right]$ on $\partial_{i}M$, with $\chi$ defined as in Section \ref{sec2}. Note that this is consistent with the fact that $\gamma_{i}\leq2\lambda_{i}$. To see this, observe that for any $\xi\in C^{\infty}_{c}(\overline{M})$
\begin{equation}\label{63}
\int_{\overline{M}}\left(-\overline{R}+2|\overline{E}|^{2}_{\overline{g}}+2(\mu-|J|_{g})+|q|_{\overline{g}}^{2}\right)\xi^{2}
\leq\int_{\overline{M}}(-|q|_{\overline{g}}^{2}+2\overline{div}(q))\xi^{2}
\leq\int_{\overline{M}}4|\overline{\nabla}\xi|^{2},
\end{equation}
from which it follows that
\begin{equation}\label{64}
4\int_{\overline{M}}\left[|\overline{\nabla}\xi|^{2}+\left(\frac{1}{2}(\mu-|J|_{g})+\frac{1}{4}|q|_{\overline{g}}^{2}\right)\xi^{2}\right]
\leq8\int_{\overline{M}}\left[|\overline{\nabla}\xi|^{2}+\left(\frac{1}{8}\overline{R}-\frac{1}{4}|\overline{E}|^{2}_{\overline{g}}\right)\xi^{2}\right].
\end{equation}
By translating the Jang surface in the $t$-direction as in \cite{SchoenYau} (page 254), we find
\begin{equation}\label{65}
\int_{\partial_{i}M}\left[|\nabla\varphi|^{2}+\left(\frac{1}{2}(\mu-|J|_{g})+\frac{1}{4}\chi^{2}\right)\varphi^{2}\right]
\leq2\int_{\partial_{i}M}\left[|\nabla\varphi|^{2}+\left(\frac{1}{8}K-\frac{1}{4}(E\cdot n)^{2}\right)\varphi^{2}\right]
\end{equation}
for any $\varphi\in C^{\infty}(\partial_{i}M)$, so that $\gamma_{i}\leq2\lambda_{i}$.

Next we describe (heuristically) how the area of each boundary component $\partial_{i}M$, naturally arises and makes a definite contribution
to the right-hand side of \eqref{17}. This is primarily a consequence of the cylindrical geometry of the Jang surface near the horizon.
Recall that as in \eqref{51}, the goal is to obtain a lower bound for $\mathcal{P}(v_{T})$ and then let $T\rightarrow\infty$. Observe that
since $(\overline{M}-\overline{M}_{t_{0}})_{i}$ approximates a cylinder for sufficiently large $t_{0}$, it follows that
\begin{equation}\label{66}
4\mathcal{P}(v)\geq\int_{\overline{M}}|\overline{\nabla}u|^{2}\geq\frac{1}{2}\sum_{i=1}^{n}\int_{(\overline{M}-\overline{M}_{t_{0}})_{i}}(\partial_{t}u)^{2}.
\end{equation}
Furthermore, since $u\sim e^{\mp\sqrt{\lambda_{i}}(t-t_{0})}\phi_{i}$ where $\phi_{i}$ is the principal eigenfunction of the operator
$\Delta-\frac{1}{8}K+\frac{1}{4}(E\cdot n)^{2}$ on $\partial_{i}M$, normalized so that $\parallel\phi_{i}\parallel_{L^{2}}^{2}=|\partial_{i}M|_{g}$, we find that
\begin{equation}\label{67}
\mathcal{P}(v)\geq\sum_{i=1}^{n}c_{i}\left(\int_{\partial_{i}M}\phi_{i}^{2}\right)\left(\int_{t_{0}}^{\infty}\lambda_{i}e^{-2\sqrt{\lambda_{i}}(t-t_{0})}\right)
=\sum_{i=1}\frac{c_{i}}{2}\sqrt{\lambda_{i}}|\partial_{i}M|_{g}
\end{equation}
where the constants $c_{i}>0$ depend on $u$. As $\lambda_{i}$ is the principal eigenvalue for a self-adjoint elliptic operator on the 2-sphere, it should behave similarly
to the principal eigenvalue of the Laplacian in that $\lambda_{i}\sim|\partial_{i}M|_{g}^{-1}$. Thus we find a natural contribution to the right-hand side of \eqref{17}, from
each boundary component, in the form of $\sqrt{|\partial_{i}M|_{g}}$.

Another somewhat more vague approach to arrive at the same intuitive conclusion, is to realize that $\mathcal{P}(v)$ is related to the electrostatic capacity of $\partial M$,
which in turn is related to $\sqrt{|\partial M|_{g}}$. There are several well-known results, and also conjectures (of P\'{o}lya and Szeg\"{o}), concerning the relationship of
capacity to the square root of boundary area in Euclidean space. It remains to be seen exactly how these generalize to a Riemannian manifold, although one result in this
direction may be found in \cite{BrayMiao}.

\section{Appendix: The Uncharged Case}\label{sec5}

In this section we make clear how the arguments above correct issues associated with the uncharged Penrose-like inequality
discussed in \cite{Khuri}. Recall that two errors were pointed out in the erratum \cite{Khuri1}. The first concerns the constant
$\sigma$ in the statement of Theorem 1.2 \cite{Khuri}; namely, this constant is in fact zero.  The second error concerns Lemma 2.2 \cite{Khuri},
in that the quantity $\overline{H}-q(\overline{N})$ may not necessarily approach
zero as $r\rightarrow 0$. By setting the electric field $E=0$ in the results of the present paper, both problems are resolved; in particular, Lemma 2.2 \cite{Khuri} is not needed. However, for the convenience of the reader, we explicitly carry out the revised proofs below for the uncharged case.

Let $\chi_{T}(y)$ denote the one parameter family of functions defined on a given boundary component $\partial_{i}M$ as the restriction
of $|q|_{\overline{g}}$ to $\partial_{i}\overline{M}_{T}$. According to the parametric estimates for the Jang equation \cite{SchoenYau}, the sequence of
functions $\chi_{T}$ is uniformly bounded and equicontinuous. Therefore after passing to a subsequence (still denoted by $\chi_{T}$ for convenience)
we have that $\chi_{T}\rightarrow\chi$ as $T\rightarrow\infty$, for some continuous function $\chi$.
There are two cases to consider, namely, case 1 when $\chi$ vanishes identically (in which case the conclusion of Lemma 2.2 holds), and case 2 when $\chi$ does not vanish identically.

Before considering both cases, we construct an appropriate conformal factor. In the work of Schoen and Yau [2] existence of a unique solution to the following boundary value problem was
established:
\begin{equation}\label{a.0}
\overline{\Delta}u-\frac{1}{8}\overline{R}u=0\text{
}\text{ }\text{ on }\text{ }\text{ }\overline{M},
\end{equation}
with $u(x)\rightarrow 0$ as $x\rightarrow\partial M$ and $u(x)\rightarrow 1$ as $|x|\rightarrow\infty$.
A slightly modified version of (4.6) in \cite{SchoenYau} shows
that the first eigenvalue, $\eta_{i}$, of the operator $\Delta-\frac{1}{8}K$ on $\partial_{i}M$, is strictly positive (here $K$ denotes Gaussian curvature). As observed by
Schoen and Yau, $u\sim e^{\mp\sqrt{\eta_{i}}t}\zeta_{i}(y)$, that is the conformal factor $u$ is asymptotic to $e^{\mp\sqrt{\eta_{i}}t}\zeta_{i}(y)$ depending on whether
the Jang surface blows up or down, where $\zeta_{1}$ is the corresponding first eigenfunction.

We will slightly perturb $u$ in order to prescribe appropriate boundary conditions
on certain cylindrical ends. For large $T$ and $T_{0}$ ($T>T_{0}$), let
$(\overline{M}_{T}-\overline{M}_{T_{0}})_{i}$ denote the component of $\overline{M}_{T}-\overline{M}_{T_{0}}$ associated with the boundary component $\partial_{i}M$.
Let $i=1,\ldots,m$ index the boundary components which fall under case 1, and
let $i=m+1,\ldots,n$ index the boundary components which fall under case 2. Set $\widehat{M}_{T}=\overline{M}-\bigcup_{i=1}^{m}(\overline{M}-\overline{M}_{T})_{i}$,
that is, $\widehat{M}_{T}$ is the Jang surface after the cylindrical ends corresponding to case 1 have been removed. Consider the boundary value problem
\begin{equation}\label{a.1}
\overline{\Delta}u_{T}-\frac{1}{8}\overline{R}u_{T}=0\text{
}\text{ }\text{ on }\text{ }\text{ }\widehat{M}_{T},
\end{equation}
\begin{equation*}
\partial_{\overline{N}}u_{T}+\frac{1}{4}\overline{H}u_{T}=\frac{1}{4}\sqrt{\frac{16\pi}{|\partial_{i}\overline{M}_{T}|_{\widehat{g}_{T}}}}u_{T}^{3}
\text{ }\text{ }\text{ on }\text{ }\text{ }\partial_{i}\overline{M}_{T},\text{ }\text{ }i=1,\ldots,m,
\end{equation*}
\begin{equation*}
u_{T}(x)\rightarrow 0\text{ }\text{ }\text{ as }\text{ }\text{ }x\rightarrow\partial_{i} M,\text{ }\text{ }i=m+1,\ldots,n,
\text{ }\text{ }
u_{T}(x)\rightarrow 1\text{ }\text{ }\text{ as }\text{ }\text{ }|x|\rightarrow\infty,
\end{equation*}
where the unit normal $\overline{N}$ points towards spatial infinity and $\widehat{g}_{T}=u_{T}^{4}\overline{g}$.
The unique solution to this problem exists by Theorem \ref{thm2} (with $E=0$), and it approximates the solution $u$ of \eqref{a.0} for large $T$, as is shown in Theorem \ref{uncharged_case} below. Note that as in \cite{SchoenYau} a separation of variables argument can be used to show that this solution possesses the same asymptotics as $u$, namely
\begin{equation}\label{a.2}
u_{T}\sim e^{\mp\sqrt{\eta_{i}}t}\zeta_{i}(y)
\end{equation}
along the ends corresponding to $\partial_{i}M$, $i=m+1,\ldots,n$.

Multiply equation \eqref{a.1} by $u_{T}=1+v_{T}$ and integrate by parts to obtain
\begin{align*}
\begin{split}
Q(v_{T}):= &\lim_{r\rightarrow\infty}\frac{1}{2}\int_{|x|=r}u_{T}\partial_{\overline{N}}u_{T}\\
\geq &\int_{\overline{M}_{T}}\frac{1}{4}|\overline{\nabla}v_{T}|^{2}
+\left(\pi(\mu-|J|)+\frac{1}{16}|q|_{\overline{g}}^{2}\right)(1+v_{T})^{2}\\
 &+\int_{\partial\overline{M}_{T}}\frac{1}{8}q(\overline{N})(1+v_{T})^{2}+\frac{1}{2}u_{T}\partial_{\overline{N}}u_{T}.
\end{split}
\end{align*}
A standard formula yields
\begin{equation*}
\partial_{\overline{N}}u_{T}=\frac{1}{4}\widehat{H}u_{T}^{3}-\frac{1}{4}\overline{H}u_{T},
\end{equation*}
where $\widehat{H}$ and $\overline{H}$ are the mean curvatures with respect to $\widehat{g}_{T}$ and $\overline{g}$, respectively. It
follows that
\begin{align}\label{a.3}
\begin{split}
Q(v_{T})
\geq&\int_{\overline{M}_{T}}\frac{1}{4}|\overline{\nabla}v_{T}|^{2}
+\left(\pi(\mu-|J|)+\frac{1}{16}|q|_{\overline{g}}^{2}\right)(1+v_{T})^{2}\\
& +\int_{\partial\overline{M}_{T}}\frac{1}{8}(q(\overline{N})-\overline{H})(1+v_{T})^{2}+\frac{1}{8}\widehat{H}(1+v_{T})^{4},
\end{split}
\end{align}
where $\mu$ and $J$ are given by \eqref{2} with $E=0$.\medskip

\noindent\textit{Case 1: $\chi\equiv 0$.}\medskip

This case corresponds to the boundary components $\partial_{i}M$, $i=1,\ldots,m$. Here the conclusion of Lemma 2.2 \cite{Khuri} is valid, and the boundary conditions of
(3.1) in \cite{Khuri} hold. Thus the methods of \cite{Khuri} (as in the proof of Theorem 3.1) apply to yield
\begin{align}\label{a.4}
\begin{split}
& \int_{\partial_{i}\overline{M}_{T}}\frac{1}{8}(q(\overline{N})-\overline{H})(1+v_{T})^{2}+\frac{1}{8}\widehat{H}(1+v_{T})^{4}\\
&\geq \left(\frac{1-\vartheta_{T}}{2}\right)\sqrt{\frac{\pi}{|\partial_{i}\overline{M}_{T}|}}\int_{\partial_{i}\overline{M}_{T}}(1+v_{T})^{2},
\end{split}
\end{align}
where the constants $\vartheta_{T}\rightarrow 0$ as $T\rightarrow\infty$.
\hfill $\Box$ \medskip

\noindent\textit{Case 2: $\chi$ does not vanish identically.}\medskip

Here we may follow the proof in Lemma \ref{lemma1} (of the present paper) directly to obtain
\begin{align}\label{a.6}
\begin{split}
& \int_{(\overline{M}_{T}-\overline{M}_{T_{0}})_{i}}\frac{1}{16}|q|_{\overline{g}}^{2}(1+v_{T})^{2}
+\int_{\partial_{i}\overline{M}_{T}}\frac{1}{8}(q(\overline{N})-\overline{H})(1+v_{T})^{2}+\frac{1}{8}\widehat{H}(1+v_{T})^{4}\\
\geq& \left(1-C\mathcal{N}(T-T_{0})^{-1}\right)\sqrt{\frac{\pi}{|\partial_{i}\overline{M}_{T}|}}\int_{\partial_{i}\overline{M}_{T}}(1+v_{T})^{2},
\end{split}
\end{align}
for $T$ sufficiently large.
\hfill $\Box$ \medskip

By combining \eqref{a.3}, \eqref{a.4}, and \eqref{a.6} we conclude that
\begin{equation}\label{a.7}
Q(v_{T})\geq \int_{\overline{M}_{T}}\frac{1}{4}|\overline{\nabla}v_{T}|^{2}
+\left(\frac{1-\vartheta_{T}}{2}\right)
\sum_{i=1}^{n}\sqrt{\frac{\pi}{|\partial_{i}\overline{M}_{T}|}}
\int_{\partial_{i}\overline{M}_{T}}(1+v_{T})^{2}.
\end{equation}

\begin{theorem}\label{uncharged_case}
Let $(M,g,k)$ be an asymptotically
flat initial data set for the Einstein equations satisfying the
dominant energy condition.  If the boundary consists
of an outermost apparent horizon with components
$\partial_{i}M$ having area $|\partial_{i}M|$,
$i=1,\ldots,n$, then
\begin{equation*}
E_{ADM}(g)\geq\frac{\sigma}{2(1+\sigma)}\sum_{i=1}^{n}\sqrt{\frac{|\partial_{i}
M|}{\pi}}
\end{equation*}
where
\begin{equation*}
\sigma=\left(\sum_{i=1}^{n}\sqrt{4\pi|\partial_{i}M|}\right)^{-1}\parallel\overline{\nabla}
u\parallel^{2}_{L^{2}(\overline{M})}.
\end{equation*}
\end{theorem}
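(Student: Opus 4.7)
The plan is to set $E=0$ throughout the argument of Section \ref{sec3} and apply it to the conformal factor $u_T$ constructed via \eqref{a.1}. The starting identity is the ADM decomposition
\begin{equation*}
E_{ADM}(g) = E_{ADM}(\overline{g}) = E_{ADM}(\widehat{g}_T) + \pi^{-1} Q(v_T),
\end{equation*}
where $\widehat{g}_T = u_T^4 \overline{g}$ and $v_T = u_T - 1$. The proof then reduces to bounding each summand from below.

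For the first summand, I would note that the conformal Laplacian equation \eqref{a.1} forces $\widehat{R}_T = 0$ on $\widehat{M}_T$; the case-2 cylindrical ends over $\partial_i M$ for $i = m+1,\ldots,n$ can be closed by adding a point at infinity since $u_T$ decays exponentially by \eqref{a.2}; and the Robin condition in \eqref{a.1} makes the remaining components $\partial_i \widehat{M}_T$ ($i = 1, \ldots, m$) surfaces of vanishing Hawking mass in $\widehat{g}_T$. Herzlich's version of the positive mass theorem then delivers $E_{ADM}(\widehat{g}_T) \geq 0$, with no charge-contribution term required here.

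For the second summand, I would run the Young-inequality argument from Section \ref{sec3} on the estimate \eqref{a.7}. Namely, suppose for contradiction that $Q(v_T) \leq \eta \sum_i \sqrt{\pi |\partial_i \overline{M}_T|_{\overline{g}}}$ for some $\eta > 0$. Inserting $(1 + v_T)^2 \geq 1 - \delta^{-1} + (1 - \delta) v_T^2$ and selecting $\delta = 1 + \sigma_T$ with
\begin{equation*}
\sigma_T = \frac{\int_{\overline{M}_T} |\overline{\nabla} v_T|^2}{2(1 - \vartheta_T) \sum_{i=1}^n \sqrt{\pi / |\partial_i \overline{M}_T|_{\overline{g}}} \int_{\partial_i \overline{M}_T} v_T^2}
\end{equation*}
makes the left side of the resulting inequality nonnegative and forces $\eta \geq \sigma_T (1 - \vartheta_T)/[2(1 + \sigma_T)]$, yielding
\begin{equation*}
Q(v_T) \geq \frac{\sigma_T (1 - \vartheta_T)}{2(1 + \sigma_T)} \sum_{i=1}^n \sqrt{\pi |\partial_i \overline{M}_T|_{\overline{g}}}.
\end{equation*}

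Combining the two bounds gives the desired inequality at finite $T$, and the final step is the passage to the limit $T \to \infty$. By the uncharged analog of Theorem \ref{thm3}, a subsequence of $u_T$ converges in $C^{\infty}_{loc}(\overline{M})$ to the unique solution $u$ of \eqref{a.0}, so the numerator of $\sigma_T$ converges to $\|\overline{\nabla} u\|^2_{L^2(\overline{M})}$. For the boundary integrals, the exponential asymptotic \eqref{a.2} and the cylindrical approximation of $\overline{g}$ near the horizon give $v_T \to -1$ and $|\partial_i \overline{M}_T|_{\overline{g}} \to |\partial_i M|$, hence $\int_{\partial_i \overline{M}_T} v_T^2 \to |\partial_i M|$, so that $\sigma_T \to \sigma$ with $\sigma$ as in the statement. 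The main technical obstacle is this last limit passage on the case-1 ends, where $u_T$ satisfies a Robin rather than a Dirichlet boundary condition; one must verify that as $T \to \infty$ the Robin data are compatible with the $u_T \to u$ limit at the boundary, which is precisely what the convergence statement of Theorem \ref{thm3} (together with the cylindrical geometry that makes $\overline{H}$ and $q(\overline{N})$ tend to zero) provides.
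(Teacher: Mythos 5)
Your proposal is correct and follows essentially the same route as the paper's proof: the decomposition $E_{ADM}(g)=E_{ADM}(\widehat{g}_{T})+\pi^{-1}Q(v_{T})$, nonnegativity of $E_{ADM}(\widehat{g}_{T})$ via Herzlich's positive mass theorem on the closed-up manifold, the Young's-inequality lower bound for $Q(v_{T})$ from \eqref{a.7}, and the limit $\sigma_{T}\rightarrow\sigma$ via the convergence $u_{T}\rightarrow u$. No substantive differences to report.
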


\begin{proof}
Consider the manifold $(\widehat{M}_{T},\widehat{g}_{T})$. Along the infinite cylindrical ends over $\partial_{i}M$, $i=m+1,\ldots,n$, the
conformal factor $u_{T}$ decays exponentially fast. Therefore as in \cite{SchoenYau} these ends may be closed by adding a point at infinity. The remaining cylindrical ends, indexed by
$i=1,\ldots,m$, correspond to the boundary components of $\widehat{M}_{T}$ which satisfy the hypotheses of Herzlich's version of the positive mass theorem (Theorem 1.4 of \cite{Khuri}).
It follows that $E_{ADM}(\widehat{g}_{T})$ is nonnegative. At this point, we can apply \eqref{a.7} and follow the same procedure as in section $\S 4$ of \cite{Khuri} to obtain
\begin{equation*}
E_{ADM}(g)\geq\frac{\sigma_{T}(1-\vartheta_{T})}{2(1+\sigma_{T})}\sum_{i=1}^{n}\sqrt{\frac{|\partial_{i}\overline{M}_{T}|}{\pi}},
\end{equation*}
where
\begin{equation*}
\sigma_{T}=\frac{\int_{\overline{M}_{T}}|\overline{\nabla}v_{T}|^{2}}
{2(1-\vartheta_{T})\sum_{i=1}^{n}\sqrt{\frac{\pi}{|\partial_{i}\overline{M}_{T}|}}\int_{\partial_{i}\overline{M}_{T}}v_{T}^{2}}.
\end{equation*}
Moreover, the same arguments in the paragraph after (4.4) in \cite{Khuri} show that $u_{T}\rightarrow u_{\infty}$, where $u_{\infty}$ is a bounded solution of boundary value problem \eqref{a.0}, and
$\sigma_{T}\rightarrow\sigma_{\infty}$. Since there is a unique bounded solution of \eqref{a.0}, it follows that $u_{\infty}=u$ and $\sigma_{\infty}=\sigma$, from which we obtain the desired result.
\end{proof}

Note that in \cite{Khuri}, in the last step, we replaced $\sigma_{\infty}$ with a different definition of $\sigma$ that employed an infimum. The purpose of this replacement was solely to give our
result an expression akin to that in Herzlich's Penrose-like inequality \cite{Herzlich}. Alas, this misguided sense of aesthetics resulted in the error mentioned at the beginning of the appendix.

Lastly, we mention that the definition of $\sigma$ in the current theorem is strictly positive, and $\sigma$ is dimensionless making it independent of the area of $\partial M$.

\end{document}